\definecolor{Darkblue}{rgb}{0,0,0.4}
\definecolor{Brown}{cmyk}{0,0.81,1.,0.60}
\definecolor{Red}{rgb}{1,0,0}
\newcommand{\mydriver}{hypertex}
 \renewcommand{\mydriver}{pdftex}
\newcommand{\lref}[2][]{\hyperref[#2]{#1~\ref*{#2}}}
\newtheorem{lemma}{Lemma}
\newtheorem{thm}{Theorem}
\newtheorem{claim}{Claim}
\newtheorem{cor}{Corollary}
\newtheorem{prop}{Proposition}
\newtheorem{definition}{Definition}
\renewcommand{\d}{\textrm{d}}
\newcommand{\ip}[2]{\langle #1, #2\rangle}
\newcommand{\R}{\mathbb{R}}
\newcommand{\E}{\mathbb{E}}
\newcommand{\OPT}{\textsc{OPT}}
\newcommand{\GAP}{\textsc{Generalized Assignment Problem}\xspace}
\newcommand{\GAPs}{\textsc{GAP}\xspace}
\newcommand{\stocp}{\textsc{Stochastic $\ell_p$ Load Balancing Problem}\xspace}
\newcommand{\stocps}{\textsc{StochLoadBal$_p$}\xspace}
\renewcommand{\P}{\mathcal{P}}
\newcommand{\Var}{\mathrm{Var}}
\newcommand{\e}{\varepsilon}
\newcommand{\J}{Y}
\newcommand{\tJ}{Y'}
\newcommand{\rJ}{Y''}
\newcommand{\tildeJ}{\tilde{Y}'}
\newcommand{\one}{\bm{1}}
\newcounter{mynotes}
\newcommand{\pSel}{\textsc{SubsetSelection$_p$}\xspace}
\newcommand{\vertiii}[1]{{\left\vert\kern-0.25ex\left\vert\kern-0.25ex\left\vert #1 \right\vert\kern-0.25ex\right\vert\kern-0.25ex\right\vert}}
\newcommand{\lf}{\nu}
\newcommand{\lfT}{\nu^+}
\title{Stochastic $\ell_p$ Load Balancing and Moment Problems \\via the $L$-Function Method}
\author{Marco Molinaro}
\date{}
\begin{document}
	\maketitle
	
	\begin{abstract}
		This paper considers stochastic optimization problems whose objective functions involve powers of random variables. For a concrete example, consider the classic \stocp (\stocps): There are $m$ machines and $n$ jobs, and we are given independent random variables $\J_{ij}$ describing the distribution of the load incurred on machine $i$ if we assign job $j$ to it. The goal is to assign each job to the machines in order to minimize the expected $\ell_p$-norm of the total load incurred over the machines. That is, letting $J_i$ denote the jobs assigned to machine $i$, we want to minimize $\E (\sum_i(\sum_{j \in J_i} Y_{ij})^p)^{1/p}$. While convex relaxations represent one of the most powerful algorithmic tools, in problems such as \stocps the main difficulty is to capture such objective function in a way that only depends on each random variable separately.
		
		In this paper,  show how to capture $p$-power-type objectives in such separable way by using the \emph{$L$-function} method. This method was precisely introduced by Lata{\l}a to capture in a sharp way the moment of sums of random variables through the individual marginals. We first show how this quickly leads to a constant-factor approximation for very general subset selection problem with $p$-moment objective.
		
		Moreover, we give a constant-factor approximation for \stocps, improving on the recent $O(\frac{p}{\ln p})$-approximation of [Gupta et al., SODA 18]. Here the application of the method is much more involved. In particular, we need to prove structural results connecting the expected $\ell_p$-norm of a random vector with the $p$-moments of its coordinate-marginals (machine loads) in a sharp way, taking into account simultaneously the different scales of the loads that are incurred in the different machines by an \emph{unknown} assignment. Moreover, our starting convex (indeed linear) relaxation has exponentially many constraints that are not conducive to integral rounding; we need to use the solution of this LP to obtain a reduced LP which can then be used to obtain the desired assignment. 	
	\end{abstract}
	
\thispagestyle{empty}
\newpage

\setcounter{page}{1}


\section{Introduction} \label{sec:intro}

		This paper considers stochastic optimization problems whose objective functions are related to powers of sums of random variables. For a concrete example, consider the classic \stocp (\stocps): There are $m$ machines and $n$ jobs, and we are given independent non-negative random variables $\J_{ij}$ (job sizes) describing the distribution of the load incurred on machine $i$ if we assign job $j$ to it. The goal is to, only knowing these distributions, assign each job to the machines in order to minimize the expected $\ell_p$-norm of the realized total load incurred over the machines. That is, letting $J_i$ denote the jobs assigned to machine $i$, we want to minimize
		\vspace{-3pt}
		\begin{align}
		\E \bigg\|\bigg(\sum_{j \in J_i} Y_{ij}\bigg)_{i \in [m]}\bigg\|_p = \E \Bigg[\sum_{i \in [m]} \bigg(\sum_{j \in J_i} Y_{ij}\bigg)^p\Bigg]^{1/p}, \label{eq:obj}
		\end{align}
		if $p \in [1,  \infty)$, and to minimize the makespan $\E \|(\sum_{j \in J_i} Y_{ij})_{i \in [m]}\|_{\infty} = \E[\max_i \sum_{j \in J_i} Y_{ij}]$ when $p = \infty$. Notice the entire assignment is done up-front without knowledge of the actual outcomes of the random variables, and hence there is no adaptivity. We remark that the $\ell_p$-norms interpolate between considering only the most loaded machine ($\ell_\infty$) and simply adding the loads of the machines ($\ell_1$), and have been used in this context since at least the 70's~\cite{chandra,cody}, since in some applications they better capture how well-balanced an allocation is~\cite{awerbuch}. This classic problem has been widely studied in its stochastic~\cite{KRT,goelIndyk,anupamSLB,pinedo}, deterministic~\cite{Lenstra1990,allNorms,AzarConvexLB,srinivasanGeneralSched,sviridenkoDecoup}, and online versions~\cite{awerbuch,uniformMachines,charikarLoadBal,caragiannis,CaragiannisRestricted,molinaro}. See \cite{anupamSLB} for a comprehensive discussion and literature review on \emph{stochastic} load balancing, most relevant for us. 
				
		The deterministic versions of such problems can typically be well-approximated through the use of convex programs; for example, this method has provided constant-factor approximations for the deterministic version of $\stocps$~\cite{AzarConvexLB,srinivasanGeneralSched,sviridenkoDecoup}. However, in the stochastic version of these problems the situation is much more complicated, since in principle terms like \eqref{eq:obj} require multi-dimensional integration due to the expectation involving powers of sums of random variables.
		
		Thus, the main element for using convex programs to tackle such stochastic problems is to be able to approximately capture the objective function in a way that only depends on each random variable individually. The first idea is to replace the random variables by just their expectation, for example reducing \eqref{eq:obj} to $\|(\sum_{j \in J_i} \E Y_{ij})_{i \in [m]}\|_p$. Unfortunately, even basic examples show that too much is lost and this simple proxy is not enough.
		 For the special case of \stocps with $p=\infty$ and identical machines (i.e., the item sizes are independent of the machines),~\cite{KRT} proposed to use the so-called \emph{effective size}~\cite{hui} of a job as a proxy instead of its expectation: For a random variable $X$ and parameter $\ell \in (1,\infty)$, its \emph{effective size} (at scale $\ell$) is
		\begin{align}
			\beta_\ell(X) := \frac{1}{\ln \ell} \cdot \ln \E\left(e^{X \cdot \ln \ell} \right); \label{eq:effectiveSize}
		\end{align}
		for $\ell = 1$, it is defined $\beta_1(X) := \E X$. Using this notion,~\cite{KRT} obtained the first constant-factor approximation for this special case of \stocps. They also use it to provide approximations for stochastic bin-packing and knapsack problems (all packing- or $\ell_\infty$-type problems). Only recently, Gupta et al.~\cite{anupamSLB} managed to use this  fruitful notion to obtain a constant approximation for the unrelated machines case (but still $p = \infty$). 
		
		However, suitable notions of effective size have not been used for $p$-power-type functions. For example, for \stocps with general $p$ only an $O(\frac{p}{\ln p})$-approximation is known, also due to~\cite{anupamSLB}, and relies on other techniques (expected size as a proxy plus Rosenthal's Inequality). Oddly, this approximation ratio goes to infinity as $p \rightarrow \infty$, despite the constant-factor approximation known for $p = \infty$. This indicates our current shortcomings in algorithmic and analytical tools for dealing with such moment-type objectives.   



	\subsection{Our results and techniques}
		
		In this paper we show how to approximate stochastic optimization problems with moment or $p$-power-type objectives using the \emph{$L$-function}\footnote{The name $L$-function is borrowed from~\cite{deLaPena}.} method. This method was precisely introduced by Lata{\l}a~\cite{latala} to capture in a sharp way the moment of sums of random variables by only looking at each of them separately. Using this method is quite simple and we hope it will find many additional applications.
		
	\paragraph{$L$-function method.} Consider non-negative independent random variables $X_1,\ldots,X_n$, and suppose we want to better understand the (raw) $p$-moment of their sum: $(\E (\sum_i X_i)^p)^{1/p}$. For that,~\cite{latala} defines a notion of ``effective size'' $\nu_{\e,p}(X_i)$ that depends on the power $p$ and on the additional parameter $\e$ with the following property (see Theorem~\ref{thm:lFunc}):
		\begin{align}
		\sum_i \lf_{p,\e}(X_i) = 1 ~~\Leftrightarrow~~ \e \approx \bigg(\E \bigg(\sum_i X_i\bigg)^p\bigg)^{1/p}, \label{eq:introLfunc}
		\end{align} 
		where in the approximation only constant factors are lost. This result shows that the moment of sums of random variables does not depend much on a \emph{stochastic interaction} between them, only on the interaction of the \emph{deterministic proxies} $\lf_{p,\e}(X_i)$. One difficulty is that this deterministic interaction has an \emph{implicit} form that depends on setting the parameter $\e$ in the ``right'' way.

		\paragraph{Quick application: Subset selection with $p$-moment objective.} Nonetheless, to show how the $L$-function method yields in a simple way approximations in the context of optimization problems, we consider the following general subset selection problem with moment objective (\pSel): There are $n$ items, the value of item $j$ is stochastic and given by the non-negative random variable $V_j$, and these random variables are independent. Given a subset $\P \subseteq \{0,1\}^n$ of the boolean cube representing the feasible sets of items, the goal is find a feasible set that maximizes the $p$-moment of the sum of the selected items' values:
	\begin{align*}
		\max_{x \in \P}~\bigg(\E \bigg(\sum_j V_j x_j\bigg)^p\bigg)^{1/p}.
	\end{align*}
	
	Using the $L$-function method, we show that one can reduce this problem to that of optimizing a \emph{deterministic linear function} over $\P$. 
	
	\begin{thm} \label{thm:pSel}
		Suppose there is a constant approximation for optimizing any non-negative linear function over $\P$ (i.e., for any non-negative vector $c \in \R^n_+$, we can find a point $\bar{x} \in \P$ satisfying $\ip{c}{\bar{x}} \ge \Omega(1)\cdot\max_{x \in \P} \ip{c}{x}$). Then there is a constant approximation for \pSel over $\P$ for any $p \in [1,\infty)$.
	\end{thm}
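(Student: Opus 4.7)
The plan is to use the $L$-function characterization (\ref{eq:introLfunc}) to reduce the stochastic $p$-moment objective to a deterministic linear functional $x \mapsto \sum_j \lf_{p,\e}(V_j)\,x_j$, at the price of having to guess the correct value of the implicit scale parameter $\e$. The key starting observation is that since $x \in \P \subseteq \{0,1\}^n$ and each $V_j$ is non-negative, we have $\lf_{p,\e}(V_j x_j) = \lf_{p,\e}(V_j)\cdot x_j$ for every $j$ (each summand is either $0$ or $V_j$ deterministically). Plugging this into (\ref{eq:introLfunc}) shows that, for any fixed $x \in \P$, the quantity $(\E(\sum_j V_j x_j)^p)^{1/p}$ equals (up to constants) the $\e$ making $\sum_j \lf_{p,\e}(V_j)\,x_j = 1$. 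Consequently, once the correct scale $\e$ is known, maximizing the stochastic objective over $\P$ reduces, up to constants, to maximizing the deterministic non-negative linear function $\sum_j \lf_{p,\e}(V_j)\,x_j$.

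The algorithm therefore enumerates $\e$ over a polynomially-bounded geometric grid (e.g., powers of $2$) bracketing every plausible value of $\OPT$. For each guess $\e$, it sets $c_j := \lf_{p,\e}(V_j)$ and invokes the assumed constant-factor linear-optimization oracle for $\P$, obtaining $\bar x(\e) \in \P$ with $\sum_j \lf_{p,\e}(V_j)\,\bar x(\e)_j \ge \Omega(1)\cdot \max_{x \in \P}\sum_j \lf_{p,\e}(V_j)\,x_j$. Across guesses, we return the candidate whose estimated $p$-moment is largest; this estimate is obtained by again invoking (\ref{eq:introLfunc}) to locate the $\e'$ with $\sum_j \lf_{p,\e'}(V_j)\,\bar x(\e)_j = 1$, so that the moment is $\Theta(\e')$.

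To see why this works, let $x^*$ be optimal with $\OPT := (\E(\sum_j V_j x^*_j)^p)^{1/p}$, and let $\e^*$ be the scale satisfying $\sum_j \lf_{p,\e^*}(V_j)\,x^*_j = 1$, so that $\e^* = \Theta(\OPT)$ by (\ref{eq:introLfunc}). Pick the grid point $\e$ within a factor of $2$ of $\e^*$; a quantitative stability property of $\lf_{p,\cdot}(X)$ — namely that $\lf_{p,\e}(X)$ changes by at most a constant factor when $\e$ is perturbed by a constant factor — then gives $\sum_j \lf_{p,\e}(V_j)\,x^*_j = \Theta(1)$, so the oracle yields $\bar x(\e)$ with $\sum_j \lf_{p,\e}(V_j)\,\bar x(\e)_j = \Omega(1)$. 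Applying (\ref{eq:introLfunc}) in reverse, and absorbing this constant into an additional constant-factor adjustment of the scale, produces $(\E(\sum_j V_j \bar x(\e)_j)^p)^{1/p} = \Omega(\e) = \Omega(\OPT)$. The main obstacle I anticipate is precisely this quantitative stability of $\lf_{p,\e}$ in the scale parameter: one needs a short lemma showing that a constant-factor slack in the normalization $\sum_j \lf_{p,\e}(V_j)\,x_j = 1$ costs only a constant factor in the induced scale, with a constant independent of $p$ and $n$. This is the one place where the argument must reach into the actual definition of the $L$-function rather than using (\ref{eq:introLfunc}) as a pure black box.
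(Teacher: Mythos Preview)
Your overall strategy---reduce to linear optimization via the $L$-functional at a guessed scale, then invert---is the same as the paper's. However, the specific ``stability'' lemma you lean on is false as stated, and this is not a cosmetic issue: for a scaled Bernoulli $V$ with $\Pr(V=M)=q$ and $q(M/\e)^p$ small, one has $\lf_{p,\e}(V)\approx \tfrac{q}{p}(M/\e)^p$, so halving $\e$ multiplies $\lf_{p,\e}(V)$ by $2^p$, not by a $p$-free constant. In particular, if your grid point lands at $\e \in (\e^*,2\e^*]$, the sum $\sum_j \lf_{p,\e}(V_j)x^*_j$ can drop from $1$ to $2^{-p}$, and after the $\alpha$-approximate oracle you are left with $\Omega(2^{-p})$, not $\Omega(1)$.

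Both uses of stability are easy to repair, though. For the grid step, simply take the largest grid point $\e \le \e^*$ and use that $\e \mapsto \lf_{p,\e}(X)$ is monotone decreasing; then $\sum_j \lf_{p,\e}(V_j)x^*_j \ge 1$ with no stability needed. For the ``slack $\Rightarrow$ scale'' step (sum $\ge c$ implies $p$-moment $\ge \Omega_c(\e)$), the cleanest route is the paper's: instead of working at $\e \approx \OPT$, set $\bar\e := \OPT/e^{1/\alpha}$. Lemma~\ref{lemma:UBF} (which says $(\E S^p)^{1/p} \le \e\,e^{\sum_j \lf_\e(X_j)}$) applied at $\bar\e$ to the optimal $x^*$ gives $\sum_j \lf_{\bar\e}(V_j)x^*_j \ge 1/\alpha$, so the $\alpha$-approximate oracle returns $\bar x$ with $\sum_j \lf_{\bar\e}(V_j)\bar x_j \ge 1$, and Lemma~\ref{lemma:LBF} applies \emph{verbatim}. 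This is exactly the paper's proof in Appendix~\ref{app:pSel}, and it sidesteps any stability lemma entirely. (Your version can also be salvaged directly: if $\sum_j \lf_\e(X_j)\ge c$, replicate each $X_j$ independently $\lceil 1/c\rceil$ times, apply Lemma~\ref{lemma:LBF} to the replicated sum, and use Minkowski to get $(\E S^p)^{1/p}\ge \Omega(c)\cdot\e$; but the pre-scaling trick is shorter.)
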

	
	The proof is very simple: By standard binary search arguments, we can assume we know the optimal objective value $\OPT$. Then based on equation \eqref{eq:introLfunc}, set $\e = \OPT$. The ``$\Rightarrow$'' direction of this equation essentially shows that to get value $\approx \OPT$ it suffices to find a solution $x \in \P$ with $\sum_j \lf_{p,\e}(V_j) x_j \ge 1$ (a deterministic linear feasibility/optimization problem), and the direction ``$\Leftarrow$'' essentially shows that the optimal solution satisfies this inequality, thus such solution can indeed be found. We carry this out more formally in Appendix \ref{app:pSel}.		
	

		\paragraph{Another application: \stocps.} In our next and main result, we show that \stocps admits a constant approximation for all $p \in [1,\infty]$, improving over the $O(\frac{p}{\ln p})$-approximation of~\cite{anupamSLB}. 

	\begin{thm} \label{thm:LB}
		For all $p \in [1,\infty]$, the \stocp admits a constant-factor approximation. 
	\end{thm}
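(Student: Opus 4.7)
The plan is to apply the $L$-function method to reduce \stocps to a deterministic assignment LP, following the same two-sided logic used in Theorem~\ref{thm:pSel} but across all $m$ machines simultaneously. By binary search one may assume a target value $T$ with $T \le \OPT \le 2T$, and the goal becomes to produce an integral assignment whose expected $\ell_p$-load is at most $O(T)$; the case $p = \infty$ is already handled by~\cite{anupamSLB}, so throughout I focus on $p < \infty$.

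The first and main technical task is a structural lemma connecting $\E \|L\|_p$ (where $L_i = \sum_{j \in J_i} Y_{ij}$) to quantities that the $L$-function can express per-machine. Applying Theorem~\ref{thm:lFunc} to a fixed machine $i$ characterizes $(\E L_i^p)^{1/p} \approx \e_i$ by the deterministic inequality $\sum_j \nu_{p,\e_i}(Y_{ij}) \one\{j \in J_i\} = 1$. However, the naive combination $\E\|L\|_p \le (\sum_i \E L_i^p)^{1/p}$ is typically too lossy, and more importantly it does not tell us how to choose the per-machine scales $(\e_i)_i$, which themselves depend on the unknown assignment. I would therefore prove a two-sided bound of the form $\E \|L\|_p \asymp \min \big\{ \|(\e_i)\|_p : (\e_i) \text{ is a valid scale profile for the assignment}\big\}$, where ``valid'' is expressed via the $L$-function constraint above. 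This is where the different scales of the different machines must be reconciled simultaneously against a single global target, and is the step most sensitive to losing constants versus $p$-dependent factors.

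With the structural lemma in hand, one obtains an LP with a constraint family for every discretized profile $(\e_i)_i$ satisfying $\|\e\|_p \le O(T)$: each profile imposes $\sum_j \nu_{p,\e_i}(Y_{ij}) x_{ij} \le 1$ on every machine $i$, alongside the standard assignment constraints $\sum_i x_{ij} = 1$ and $x \ge 0$. This LP has exponentially many constraints but admits a polynomial separation oracle (given fractional $x$, pick greedily for each machine $i$ the smallest $\e$ that can be made to violate and check global feasibility of the resulting profile). Since the scales are not fixed, the LP is not directly amenable to rounding; I would use the optimal fractional $x^*$ to extract a canonical scale $\tilde{\e}_i$ per machine (e.g., the smallest $\e$ for which the $L$-function constraint becomes tight at $x^*$), producing a reduced LP whose only per-machine constraints are $\sum_j \nu_{p,\tilde{\e}_i}(Y_{ij}) x_{ij} \le O(1)$. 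By construction, $x^*$ certifies feasibility of the reduced LP up to constants, and $\|\tilde{\e}\|_p = O(T)$.

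Finally, the reduced LP has the structure of a \GAP-style assignment LP and can be rounded by Shmoys--Tardos-type arguments at a constant-factor loss per machine. Applying the ``$\Rightarrow$'' direction of \eqref{eq:introLfunc} to the resulting integral assignment yields $(\E L_i^p)^{1/p} = O(\tilde{\e}_i)$ for every $i$, so $\E\|L\|_p \le (\sum_i \E L_i^p)^{1/p} = O(\|\tilde{\e}\|_p) = O(T) = O(\OPT)$. The hard part is the structural lemma: unlike in \pSel where a single scalar quantity determines the objective, here the $m$ scales are coupled through an $\ell_p$ aggregate, and to avoid $p$-dependent losses (as in the Rosenthal-based approach of~\cite{anupamSLB}) the lemma must handle all scale regimes uniformly over the unknown assignment. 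The extraction of a canonical scale from $x^*$, and proving that it only costs constants when passing from the exponential LP to the reduced LP, is the secondary technical difficulty.
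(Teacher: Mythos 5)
Your high-level architecture (binary search, exponential LP with per-profile constraints, extraction of canonical scales, reduction to a \GAPs-style LP, Shmoys--Tardos rounding) is the right skeleton and matches the paper's plan. But the core structural lemma you propose is false, and the LP you derive from it is infeasible, so the argument does not close.

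\paragraph{The structural lemma is false.}
You claim $\E\|L\|_p \asymp \min\{\|(\e_i)\|_p : (\e_i) \text{ valid}\}$, where validity means $\sum_j \nu_{p,\e_i}(Y_{ij})\one\{j\in J_i\}=1$ per machine. By Theorem~\ref{thm:lFunc}, a valid $\e_i$ is determined up to constants as $\e_i\approx(\E L_i^p)^{1/p}$, so your claim reduces to $\E\|L\|_p \asymp (\sum_i \E L_i^p)^{1/p}$. The paper explicitly exhibits a counterexample (Section~\ref{sec:lpMoments}): take $L_1\sim\mathrm{Poisson}(1)$ and $L_2=\cdots=L_m=0$; then $\E\|L\|_p=\E L_1=1$ while $(\sum_i\E L_i^p)^{1/p}=(\E L_1^p)^{1/p}\approx p/\ln p$. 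So the two quantities differ by $\Theta(p/\ln p)$, exactly the loss you say you want to avoid. The converse of Jensen only holds under extra hypotheses that you do not impose: the $Y_{ij}$ must be \emph{truncated} (to $[0,\alpha T]$) and no single coordinate may dominate the sum $\sum_i \E L_i^p$ (Lemma~\ref{lemma:convJensen2}). Your proposal has no truncation step at all and handles no exceptional parts, so it cannot even invoke that partial converse.

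\paragraph{The LP is infeasible.}
You impose, for every discretized profile $(\e_i)_i$ with $\|\e\|_p\le O(T)$ and \emph{every machine $i$}, the per-machine constraint $\sum_j \nu_{p,\e_i}(Y_{ij})x_{ij}\le 1$. Since $\nu_{\e,p}(X)\to\infty$ as $\e\to 0$ for nontrivial $X$, and since profiles with small coordinates are allowed ($\|\e\|_p\le O(T)$ permits $\e_i$ arbitrarily small), this constraint set has no solution other than $x=0$, which is infeasible for the assignment polytope. Even if you impose a natural floor $\e_i\ge T/m^{1/p}$, the optimal assignment need not satisfy a unit $\nu$-mass bound at that scale: in the one-heavy-machine example above, the mass $\sum_j\nu_{T/m^{1/p}}(Y_{1j})$ is $\Omega(1)$ only when $\e_1\gtrsim(p/\ln p)T$, which exceeds the budget. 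The missing idea is that the $L$-function constraints must be \emph{aggregated across machines with scale-dependent weights}, not imposed per machine; the paper's constraint
\begin{equation*}
\sum_i \frac{1}{v_i}\Bigl(\sum_j \nu_{\O/v_i^{1/p}}(\tilde{X}_{i,j}) - 1\Bigr) \le 3
\end{equation*}
down-weights a machine held to a small scale $\e_i=\O/v_i^{1/p}$ by the factor $1/v_i$ and subtracts the per-machine ``free'' unit of mass, which is what makes the constraint valid for the optimum while still certifying $\sum_i \E S_i^p\lesssim \O^p$ after the right $\bar{v}_i$'s are chosen (Theorem~\ref{thm:mainCstr}). You also omit the separate $\ell_\infty$-control constraints (via $\beta_\ell$) and the capped $\nu^+$, both of which the paper needs for the Shmoys--Tardos step to lose only constants. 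Your forward direction (given good $\tilde{\e}_i$'s, bound $\E\|L\|_p$ via Jensen) is fine; the gap is that you have no valid way to show such $\tilde{\e}_i$'s with $\|\tilde{\e}\|_p=O(T)$ exist.
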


	In this case the application of the $L$-function method is much more involved. The first issue is that the objective function \eqref{eq:obj} is not of the form that can be tackled directly by the $L$-function method. To connect the two, we prove a bound relating the expected $\ell_p$-norm of the sum of random variables and the $p$-moments of these sums. One direction is easy: given independent RVs $\{X_{ij}\}_{ij}$ and letting $S_i = \sum_j X_{ij}$, by the concavity of $x \mapsto x^{1/p}$ Jensen's inequality gives  
	\begin{align}
		\E \|(S_1,\ldots,S_m)\|_p \le \bigg(\E \|(S_1,\ldots,S_m)\|_p^p  \bigg)^{1/p} = \bigg(\sum_i \E S_i^p \bigg)^{1/p},  \label{eq:jensen}
	\end{align}
	so the expected $\ell_p$-norm is upper bounded by the moments $\E S_i^p$.	However, the other direction (with constant factor loss) is not true in general. Nonetheless, we prove such converse inequality under additional assumptions on the moments $\E S_i^p$ (that are discharged later); this is done in Section \ref{sec:lpMoments}. 
	
	Given this result, the idea is to write an assignment LP with additional linear constraints based on the $\lf_{p,\e_i}(\J_{ij})$'s to control the moment of the loads in each of the machines, and thus the objective function \eqref{eq:obj}. But the second issue appears: even if we assume to know the optimal objective value $\OPT$, we do not know the  moment of the loads in each machine in the optimal solution, needed to set the parameters $\e_i$. Thus, we need to write a \emph{valid constraint for each of the possible combination of $\e_i$'s}. The general theory behind it is developed in Section \ref{sec:lFuncLp}, and the LP is presented in Section \ref{sec:startingLP}. Addressing a similar issue in the case $p = \infty$ was a main contribution of~\cite{anupamSLB} and we borrow ideas from it, though in the case $p < \infty$ they need to be modified to avoid super-constant losses, see discussion in Section \ref{sec:lFuncLp}.
	
	Finally, as indicated, this LP has a large (exponential in $m$) number of inequalities, and thus it seems unlikely one can convert a fractional solution into an integral assignment satisfying all of the constraints. Thus, again inspired by~\cite{anupamSLB}, we use the optimal solution of this LP to obtain an estimate of the ``right'' $\e_i$'s for each of the machines and write a reduced LP based on them. This reduced LP is essentially one for the \GAP, for which one can use the classic algorithm by Shmoys and Tardos~\cite{ST} to obtain an approximate integral assignment. 
	
	We remark that even in the deterministic version of the problem previous approximations relied on convex programs~\cite{AzarConvexLB,srinivasanGeneralSched,sviridenkoDecoup}, so our techniques also give the first LP-based approach in this case.

	
\subsection{Notation}

	Unless specified, the letter $p$ always denotes a value in $(1,\infty]$. Given a vector $v \in \R^m$, its $\ell_p$-norm is defined by $\|v\|_p := (\sum_i v_i^p)^{1/p}$. Given a subset of coordinates $I \subseteq [m]$, we use $v_I = (v_i)_{i \in I}$ to denote the restriction of $v$ to these coordinates. When computationally relevant, we assume that the input distributions are discrete, supported on a finite set, and given explicitly, i.e., for each $x$ in the support we are given $\Pr(X = x)$.


\section{The L-function method} \label{sec:LFunc}

	\begin{definition}
		For any random variable $X$ and parameters $p,\e > 0$, functional $\lf_{\e,p}$ is defined as $$\lf_{\e,p}(X) := \frac{1}{p}\, \ln\bigg(\E\bigg|1 + \frac{X}{\e}\bigg|^p \bigg).$$
	\end{definition}	
		
		To simplify the notation, we omit the subscript $p$ in $\lf_{\e,p}$. As mentioned above, the main property of this functional is the following:

		\begin{thm}[Theorem 1.5.2 of~\cite{deLaPena}] \label{thm:lFunc}
			Consider non-negative independent random variables $X_1, \ldots, X_n$, and let $S = \sum_j X_j$. Let $\e^*$ be such that $\sum_j \lf_{\e^*}(X_j) = 1$. Then $$\left(\frac{\e^*}{10}\right)^p \le \E S^p \le (e \e^*)^p.$$
		\end{thm}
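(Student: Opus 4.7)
The upper bound follows from a short independence argument. Since each $X_j \ge 0$, we have the pointwise inequality $1 + S/\e^* \le \prod_j(1 + X_j/\e^*)$. Raising to the $p$-th power, taking expectations, and using independence to turn the expectation of a product into a product of expectations,
\begin{align*}
\E(1+S/\e^*)^p \;\le\; \E\prod_j(1+X_j/\e^*)^p \;=\; \prod_j \E(1+X_j/\e^*)^p \;=\; e^p,
\end{align*}
where the final equality is precisely the defining equation $\sum_j \lf_{\e^*}(X_j) = 1$. Consequently $\|S\|_p \le \e^*\,\|1+S/\e^*\|_p \le e\e^*$, which is the upper half of the claim.

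For the lower bound I would argue the contrapositive. The map $\e \mapsto \sum_j \lf_\e(X_j)$ is monotonically decreasing in $\e$, so it suffices to exhibit some $\bar\e \ge 10\|S\|_p$ with $\sum_j \lf_{\bar\e}(X_j) \le 1$, as this would force $\e^* \le \bar\e$. Setting $\bar\e := 10\|S\|_p$, I would bound each $\lf_{\bar\e}(X_j)$ by starting from the tail representation
\begin{align*}
\E(1+X_j/\bar\e)^p - 1 \;=\; p\int_0^\infty \Pr(X_j > t\bar\e)\,(1+t)^{p-1}\,dt,
\end{align*}
splitting the integral at a well-chosen threshold, and controlling the two pieces using (i) the global first-moment bound $\sum_j \E X_j = \E S \le \|S\|_p = \bar\e/10$ in the small-$t$ regime and (ii) the Markov-type tail estimate $\Pr(X_j > t\bar\e) \le \Pr(S > t\bar\e) \le (10t)^{-p}$ (valid uniformly in $j$ since $X_j \le S$) in the large-$t$ regime. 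Summing the resulting bound on $\ln \E(1+X_j/\bar\e)^p$ over $j$ and dividing by $p$ should yield $\sum_j \lf_{\bar\e}(X_j) \le 1$.

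The main obstacle is keeping the constants absolute. Crude bounds such as $(1+x)^p \le 2^p\max(1,x^p)$ or the $L^p$-triangle inequality $\|1+X_j/\bar\e\|_p \le 1 + \|X_j\|_p/\bar\e$ introduce losses that are respectively exponential in $p$ or linear in $n$, either of which would destroy the absolute constant $10$ in the statement. Lata{\l}a's original argument, recorded as Theorem~1.5.2 of~\cite{deLaPena}, threads this needle by choosing the split threshold carefully and exploiting the product structure of $\prod_j \E(1+X_j/\e^*)^p = e^p$ rather than union-bounding term by term; since the result is classical I would cite the reference rather than reproduce the full computation here.
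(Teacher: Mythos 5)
The paper does not prove this theorem: it is cited verbatim as Theorem~1.5.2 of~\cite{deLaPena} (Lata{\l}a's result), and the only piece the paper does reproduce is the upper-bound computation, in the proof of Lemma~\ref{lemma:UBF}. Your upper bound --- pointwise $1 + S/\e^* \le \prod_j(1+X_j/\e^*)$, raise to the $p$-th power, take expectations, use independence, and read off $\prod_j \E(1+X_j/\e^*)^p = e^{p\sum_j\lf_{\e^*}(X_j)} = e^p$ from the defining equation --- is exactly that computation and is correct.

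Your lower bound, however, is only a sketch, and as written the large-$t$ regime does not close. The estimate $\Pr(X_j > t\bar\e) \le \Pr(S > t\bar\e) \le (10t)^{-p}$ is uniform in $j$, so summing over $j$ after integrating produces an extra factor of $n$, which destroys the absolute constant you need. What is actually required is a bound on $\sum_j \Pr(X_j > t\bar\e)$ itself, and this is where the independence of the $X_j$'s must be used a second time: for non-negative independent variables $\Pr(\max_j X_j > u) = 1 - \prod_j(1-\Pr(X_j > u)) \ge 1 - e^{-\sum_j\Pr(X_j>u)}$, so when $\sum_j\Pr(X_j>u)\le 1$ one gets $\sum_j\Pr(X_j>u) \le \tfrac{e}{e-1}\Pr(\max_j X_j>u) \le \tfrac{e}{e-1}\Pr(S>u)$. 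This comparison of $\sum_j\Pr(X_j>\cdot)$ with $\Pr(S>\cdot)$ (rather than bounding each $\Pr(X_j>\cdot)$ separately) is the mechanism Lata{\l}a's argument exploits, and it is the missing idea in your sketch. You do honestly flag that completing the constant bookkeeping is delicate and defer to the reference; that is a defensible choice for a cited classical theorem, but it means the lower half of the statement is not established by your write-up.
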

		
	It will be convenient to have slightly more flexible versions of these bounds. 
		
	\begin{lemma} \label{lemma:UBF}
		Let $X_1,\ldots,X_n$ and $S$ be as in Theorem \ref{thm:lFunc}. For any $\e > 0$, $$\E S^p \le \e^p e^{p \sum_j \lf_\e(X_j)}.$$
	\end{lemma}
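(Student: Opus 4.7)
The plan is to unwind the definition of $\lf_\e$ and reduce the claim to a pointwise inequality between a product and a sum. By the definition, $e^{p\lf_\e(X_j)} = \E|1 + X_j/\e|^p = \E(1 + X_j/\e)^p$, where the absolute value can be dropped because $X_j \ge 0$. Hence the right-hand side of the lemma equals
\begin{align*}
\e^p \prod_j \E\bigg(1 + \frac{X_j}{\e}\bigg)^p = \e^p\, \E \prod_j \bigg(1 + \frac{X_j}{\e}\bigg)^p,
\end{align*}
where the last equality uses independence of the $X_j$'s.

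So it suffices to show the pointwise bound $S^p \le \e^p \prod_j (1 + X_j/\e)^p$, i.e.\ $(S/\e)^p \le \prod_j (1 + X_j/\e)^p$. Writing $a_j := X_j/\e \ge 0$, expanding the product gives $\prod_j (1+a_j) = 1 + \sum_j a_j + \sum_{j<k} a_j a_k + \cdots \ge \sum_j a_j = S/\e$, since all cross-terms are non-negative. Raising both non-negative sides to the $p$-th power preserves the inequality, yielding $\prod_j (1+X_j/\e)^p \ge (S/\e)^p$ almost surely.

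Taking expectations of this pointwise inequality and multiplying by $\e^p$ gives the desired bound $\E S^p \le \e^p\, \E\prod_j(1+X_j/\e)^p = \e^p e^{p\sum_j \lf_\e(X_j)}$. There is no real obstacle here: the only mild care is to note that absolute values can be dropped because of non-negativity, and that the pointwise product-versus-sum inequality needs $X_j \ge 0$, both of which are hypotheses of the lemma.
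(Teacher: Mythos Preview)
Your proof is correct and follows essentially the same approach as the paper: both hinge on the pointwise inequality $\sum_j a_j \le \prod_j (1+a_j)$ for non-negative $a_j$'s, then raise to the $p$-th power and use independence to factor the expectation of the product. The only cosmetic difference is that the paper argues left-to-right starting from $\E S^p$, whereas you unwind the right-hand side first.
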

	
	\begin{proof}
		This is the development in page 37 of~\cite{deLaPena}, which we reproduce for convenience. Using the inequality $1 + \sum_i a_i \le \prod_i (1 + a_i)$ valid for non-negative $a_i$'s, we have
		\begin{align*}
		 \E\bigg(\frac{1}{\e} \sum_j X_j \bigg)^p & \le  
		 \E\bigg(1 + \frac{1}{\e} \sum_j X_j \bigg)^p \le \E\bigg(\prod_j \bigg(1 + \frac{1}{\e} X_j\bigg) \bigg)^p \\
		 &\le \prod_j \E \bigg(1 + \frac{1}{\e} X_j\bigg)^p = e^{p \sum_j \lf_\e(X_j)}.
		\end{align*}
		Multiplying both sides by $\e^p$ concludes the proof. 
	\end{proof}
	
	\begin{lemma} \label{lemma:LBF}
		Let $X_1,\ldots,X_n$ and $S$ be as in Theorem \ref{thm:lFunc}. If $\sum_j \lf_\e(X_j) \ge 1$, then $\E S^p \ge \left(\frac{\e}{10}\right)^p$.
	\end{lemma}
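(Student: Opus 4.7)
The plan is to reduce the bound to the tight threshold value $\e^*$ already delivered by Theorem~\ref{thm:lFunc}, exploiting the monotonicity of $\e \mapsto \lf_\e(X)$ in $\e$ for non-negative $X$.

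First I would record the key monotonicity observation: since each $X_j \ge 0$, the map $\e \mapsto (1 + X_j/\e)^p$ is pointwise non-increasing in $\e$, so by monotonicity of expectation and of $\ln$, the scalar function $\e \mapsto \lf_\e(X_j) = \frac{1}{p}\ln \E(1 + X_j/\e)^p$ is non-increasing. Consequently $\e \mapsto \sum_j \lf_\e(X_j)$ is non-increasing and continuous in $\e > 0$.

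Next I would invoke Theorem~\ref{thm:lFunc} to pick $\e^*$ with $\sum_j \lf_{\e^*}(X_j) = 1$, which gives $\E S^p \ge (\e^*/10)^p$. Now the hypothesis $\sum_j \lf_\e(X_j) \ge 1 = \sum_j \lf_{\e^*}(X_j)$, together with monotonicity, forces $\e \le \e^*$. Chaining the two inequalities yields $\E S^p \ge (\e^*/10)^p \ge (\e/10)^p$, as desired.

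The only subtlety (not a real obstacle) is ensuring the threshold $\e^*$ exists: the hypothesis $\sum_j \lf_\e(X_j) \ge 1$ rules out the degenerate case where all $X_j$ are identically zero, and otherwise continuity gives $\sum_j \lf_{\e'}(X_j) \to \infty$ as $\e' \to 0^+$ and $\to 0$ as $\e' \to \infty$, so by the intermediate value theorem a suitable $\e^*$ exists, allowing Theorem~\ref{thm:lFunc} to apply. No calculation beyond this monotonicity argument is needed.
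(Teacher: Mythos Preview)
Your proof is correct and follows essentially the same approach as the paper: both pick the threshold value $\e^*$ (the paper writes it as $t\e$ for some $t \ge 1$) at which $\sum_j \lf_{\e^*}(X_j) = 1$, use monotonicity of $\e \mapsto \lf_\e$ to conclude $\e^* \ge \e$, and then apply the lower bound of Theorem~\ref{thm:lFunc}. Your treatment is slightly more careful in that you explicitly address the existence of $\e^*$ via continuity and the limiting behavior, which the paper leaves implicit.
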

	
	\begin{proof}
		Let $t$ be such that $\sum_j \lf_\e(X_j/t) = 1$; this is equivalent to $\sum_j \lf_{t \e}(X_j) = 1$. By our assumption and the fact $x \mapsto \lf_x$ is decreasing, notice that $t \ge 1$. Then from Theorem \ref{thm:lFunc} we have $\E S^p \ge (\frac{t\e}{10})^p \ge (\frac{\e}{10})^p$, which concludes the proof.  
	\end{proof}



\section{Towards Stochastic $\ell_p$ Load Balancing: Controlling $\ell_p$-norm in a separable way}

 Although we work on a more abstract setup, it may be helpful to think throughout this section that the $\{X_{i,j}\}_j$ represents the set of jobs assigned to machine $i$, and $S_i = \sum_j X_{i,j}$ represents the load of this machine.

\subsection{Relating expected $\ell_p$-norms and moments} \label{sec:lpMoments}

	The goal of this section is to relate the expected $\ell_p$-norm $\E \|(S_1,\ldots,S_m)\|_p$ of a random vector $S = (S_1,\ldots,S_m)$  and the coordinate moments $\E S_i^p$. As mentioned in the introduction, Jensen's inequality (inequality \eqref{eq:jensen}) gives the upper bound $\E \|S\|_p \le (\sum_i \E S_i^p )^{1/p}$; in this section we prove a partial converse to this inequality. 

	
	To see the difficulty in obtaining such converse suppose $S_1$ is a Poisson random variable with parameter $\lambda = 1$, and $S_2,\ldots,S_m = 0$. It is known that $\E S^p_1 \approx (\frac{p}{\ln p})^p$, and the Jensen's based inequality \eqref{eq:jensen} gives the upper bound $\E \|S\|_p \le (\sum_i \E S^p_i)^{1/p} \approx \frac{p}{\ln p}$. However, the actual expected norm is $\E \|S\|_p = \E S_1 = 1$. Thus, in general it is not possible to obtain a converse to the Jensen's based inequality without losing a factor of $\Omega(\frac{p}{\ln p})$.\footnote{This unavoidable gap of $\frac{p}{\ln p}$ between the moment $(\E S_1^p)^{1/p}$ and the expectation $\E S_1$ is one of the losses in the $O(\frac{p}{\ln p})$-approximation of \cite{anupamSLB}, which appears from the use of Rosenthal's Inequality.} Nonetheless, we show that one can obtain tighter bounds as long as none of the $S_i$'s contributes too much to the sum $\sum_i \E S_i^p$ (and each $S_i$ is a sum of ``small'' random variables). For that we need the following sharp moment comparison from~\cite{hitczenko2001}, which is a vast generalization of Khinchine's Inequality; we simplify the statement for our purposes, and for a RV $X$ use $\vertiii{X}_p := (\E X^p)^{1/p}$ to denote its $p$-th moment.  
	
	\begin{thm}[Theorem 6.2 of~\cite{hitczenko2001}] \label{thm:compMoments}
		Let $X_1, \ldots, X_n$ be independent real-valued random variables. Let $S = \sum_i X_i$, and $M = \max_i X_i$. Then there is a constant $c$ such that for all $p,q \ge 1$
		\begin{align*}
			\vertiii{S}_q \le c \cdot \frac{q}{\max\{p, \ln(e + q)\}} \bigg( \vertiii{S}_p + \vertiii{M}_q\bigg).
		\end{align*}
	\end{thm}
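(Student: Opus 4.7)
My plan is to use the classical truncation-and-decomposition paradigm underlying Rosenthal-type inequalities, refined to extract the sharp $q/\max\{p,\ln(e+q)\}$ scaling. The interesting regime is $q > p$, since otherwise monotonicity of $L^q$-norms makes the inequality trivial. After a standard symmetrization step to pass to non-negative (or symmetric) summands, I would fix a truncation level $\tau$ and split each summand as $X_i = X_i \mathbf{1}(|X_i|\le\tau) + X_i \mathbf{1}(|X_i|>\tau)$, writing $S = S_s + S_b$. By the triangle inequality in $L^q$, it suffices to bound $\vertiii{S_s}_q$ and $\vertiii{S_b}_q$ separately and then optimize over $\tau$.

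\textbf{Analyzing each part.} For the big part, $S_b$ is supported on the random set $I = \{i : |X_i| > \tau\}$, whose size is controlled by a Markov bound $\E|I| \le (\vertiii{M}_q/\tau)^q$. Combined with a maximal-type estimate (bounding $S_b$ by $|I|\cdot M$ up to constants and then using independence to disentangle), one obtains $\vertiii{S_b}_q \le C\,\vertiii{M}_q$, which is harmless because $\vertiii{M}_q$ already appears on the right-hand side. For the small part, every summand is bounded by $\tau$, so I would apply Lemma~\ref{lemma:UBF} with moment $q$ to get $\vertiii{S_s}_q \le \e \cdot \exp(\sum_i \lf_{\e,q}(X_i\mathbf{1}(|X_i|\le\tau)))$. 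Expanding $\lf_{\e,q}$ using $|X_i|/\e \le \tau/\e$ and choosing $\e$ at the scale $(\sum_i \vertiii{X_i \mathbf{1}(|X_i|\le\tau)}_q^q)^{1/q}$ makes the exponent $O(1)$, reducing the small-part bound to that latter quantity. One then relates it to $\vertiii{S}_p$ via an interpolation $\vertiii{X_i \mathbf{1}(|X_i|\le\tau)}_q^q \le \tau^{q-p}\vertiii{X_i}_p^p$ together with a safer Rosenthal bound at the exponent $p$, namely $\sum_i \vertiii{X_i}_p^p \le C^p \vertiii{S}_p^p$, which yields a bound of the form $\tau^{1-p/q}\vertiii{S}_p^{p/q}$ (times constants).

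\textbf{Main obstacle.} The delicate point is extracting the sharp constant $q/\max\{p,\ln(e+q)\}$ in both regimes simultaneously. When $p \le \ln(e+q)$ one must recover the standard Rosenthal factor $q/\ln q$, which comes from balancing $\tau$ so that the big-part and small-part bounds equalize at the logarithmic scale. When $p > \ln(e+q)$, the factor must improve to the smaller $q/p$, and this improvement is possible only because $\vertiii{S}_p$ (rather than $\vertiii{S}_2$ or a pure variance proxy) carries enough high-moment information to absorb the interpolation loss. Tuning $\tau$ as a function of both $p$ and $q$ to hit the right balance on both sides of the crossover $p \approx \ln(e+q)$, without leaking spurious logarithmic or polynomial factors, is what I expect to be the principal technical difficulty; it is plausibly cleaner to invoke the Lata\l a $L$-function characterization (Theorem~\ref{thm:lFunc}) directly to sidestep some of this casework, but the underlying balancing is unavoidable.
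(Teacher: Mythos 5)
This is a cited result, not one the paper proves: the statement is imported (in simplified form, as the surrounding text notes) from Theorem~6.2 of Hitczenko and Montgomery-Smith~\cite{hitczenko2001}, and the paper uses it as a black box. There is no proof in the paper to compare your attempt against.

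On your sketch in its own right: the truncation-and-balance plan is the right family of ideas for Rosenthal-type estimates, and your closing remark --- that it may be cleaner to invoke Lata{\l}a's $L$-function characterization (Theorem~\ref{thm:lFunc}) directly --- is in fact close to what the cited source does; one compares the $L$-function conditions at exponents $p$ and $q$ and the factor $q/\max\{p,\ln(e+q)\}$ emerges from that comparison rather than from an explicitly chosen truncation level. Two steps in your write-up would need real work to become a proof. First, the claim $\vertiii{S_b}_q \le C\,\vertiii{M}_q$ does not follow from a Markov bound on $\E|I|$ alone: the random number of big terms and, for real-valued $X_i$, the possibility of cancellation require a Hoffmann-J{\o}rgensen or level-set argument, not just a count. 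Second, the step $\sum_i \vertiii{X_i}_p^p \le C^p\,\vertiii{S}_p^p$ is immediate (with $C=1$) for non-negative summands but fails for signed $X_i$ without the symmetrization you gesture at, and after symmetrizing the constants must be tracked uniformly across the crossover $p \approx \ln(e+q)$, which is precisely the difficulty you flag as the ``main obstacle.'' For the purposes of this paper, citing~\cite{hitczenko2001} is the intended resolution.
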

	
	Here is our converse to the Jensen's based inequality.
	
	\begin{lemma} \label{lemma:convJensen2}
		Let $\{X_{i,j}\}_{i,j}$ be independent random variables in $[0,1]$ and let $S_i := \sum_j X_{i,j}$. Suppose that $\E S_i^p \le 1$ for all $i$ but $(\sum_i \E S_i^p)^{1/p} \ge \frac{1}{\alpha}$ for a small enough constant $\alpha$ (independent of $p$). Then $\E \|S\|_p > \frac{1}{4} (\sum_i \E S_i^p)^{1/p}$.
	\end{lemma}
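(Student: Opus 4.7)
The plan is to establish concentration of the scalar random variable $T := \|S\|_p^p = \sum_i S_i^p$, specifically to show $\E T^2 \le O((\E T)^2)$, and then extract a lower bound on $\E T^{1/p} = \E \|S\|_p$ via a Paley--Zygmund-type argument. The assumption $(\sum_i \E S_i^p)^{1/p} \ge 1/\alpha$ guarantees that $\E T = \sum_i \E S_i^p$ is at least $\alpha^{-p}$, which is the quantitative slack that will absorb the losses in the moment-comparison step.

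First I would expand $\E T^2$ using independence of the $S_i$'s:
\begin{align*}
\E T^2 = \sum_i \E S_i^{2p} + \sum_{i\ne i'} (\E S_i^p)(\E S_{i'}^p) \le \sum_i \E S_i^{2p} + (\E T)^2,
\end{align*}
so it suffices to control $\sum_i \E S_i^{2p}$. To that end, I would apply Theorem~\ref{thm:compMoments} to each $S_i = \sum_j X_{i,j}$ with $q = 2p$; since $q/\max(p,\ln(e+q)) \le 2$, the prefactor is an \emph{absolute} constant, giving $\vertiii{S_i}_{2p} \le C_0(\vertiii{S_i}_p + \vertiii{M_i}_{2p})$ with $M_i = \max_j X_{i,j} \in [0,1]$. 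Using $M_i \le 1$ together with $M_i \le S_i$ pointwise yields $\E M_i^{2p} \le \E M_i^p \le \E S_i^p$, and the hypothesis $\E S_i^p \le 1$ gives $(\E S_i^p)^2 \le \E S_i^p$. Raising the Hitczenko inequality to the $2p$-th power (and using $(a+b)^{2p} \le 2^{2p-1}(a^{2p}+b^{2p})$) and substituting these two bounds yields $\E S_i^{2p} \le (2C_0)^{2p}\,\E S_i^p$, and summing gives $\sum_i \E S_i^{2p} \le (2C_0)^{2p}\,\E T$.

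The key point is that the hypothesis $\E T \ge \alpha^{-p}$ is tuned exactly to neutralize the exponential-in-$p$ factor $(2C_0)^{2p}$: choosing $\alpha$ a small enough absolute constant (say $\alpha \le 1/(4C_0^2)$, independent of $p$), we obtain $\alpha^{-p} \ge (2C_0)^{2p}$ for all $p \ge 1$, whence $\E T^2 \le 2(\E T)^2$. Paley--Zygmund then delivers $\Pr[T \ge \tfrac12 \E T] \ge \tfrac18$, and therefore
\begin{align*}
\E\|S\|_p = \E T^{1/p} \ge \bigl(\tfrac12 \E T\bigr)^{1/p}\cdot \Pr\bigl[T \ge \tfrac12 \E T\bigr] \ge \tfrac{1}{16}(\E T)^{1/p}.
\end{align*}
Shrinking $\alpha$ further forces $\E T^2/(\E T)^2$ arbitrarily close to $1$, which allows one to run Paley--Zygmund at a threshold $\theta \E T$ with $\theta$ smaller and tune the constant up to the stated $1/4$.

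The main obstacle is precisely the constant $(2C_0)^{2p}$ introduced when raising Hitczenko's moment comparison to the $2p$-th power -- this exponential-in-$p$ loss is unavoidable with this approach. The whole point of the twin hypotheses $\E S_i^p \le 1$ (which both tames $\vertiii{M_i}_{2p}$ via $\E M_i^{2p} \le \E S_i^p$ and collapses $(\E S_i^p)^2$ down to $\E S_i^p$) and $\E T \ge \alpha^{-p}$ (whose $\alpha^{-p}$ scaling is exactly what dominates $(2C_0)^{2p}$) is to absorb this loss while keeping $\alpha$ a universal constant; the Poisson example discussed just before the lemma shows no such converse to Jensen holds without them.
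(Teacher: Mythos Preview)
Your argument is essentially the same as the paper's: both bound $\E S_i^{2p}$ via Theorem~\ref{thm:compMoments} (using $M_i\le 1$ to collapse $\E M_i^{2p}\le \E S_i^p$ and $\E S_i^p\le 1$ to collapse $(\E S_i^p)^2\le \E S_i^p$), then run a second-moment concentration argument on $T=\sum_i S_i^p$ and read off the bound on $\E T^{1/p}$; the hypothesis $\E T\ge \alpha^{-p}$ is used identically to kill the $C^{2p}$ factor. The only difference is that the paper uses Chebyshev (on $\Var(T)=\sum_i\Var(S_i^p)$) rather than Paley--Zygmund, which directly yields $\Pr\big(T\ge\tfrac12\E T\big)\ge\tfrac12$ and hence $\E T^{1/p}\ge\tfrac12(\tfrac12)^{1/p}(\E T)^{1/p}>\tfrac14(\E T)^{1/p}$ for $p>1$.

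One small caveat on your last step: pushing Paley--Zygmund alone to the stated constant $\tfrac14$ does not quite work for $p$ close to $1$, since the resulting factor $\theta^{1/p}(1-\theta)^2$ is at most $4/27<\tfrac14$ when $p\approx 1$, regardless of how close $\E T^2/(\E T)^2$ is to~$1$. This is not a real obstacle---you already have $\Var(T)\le \epsilon(\E T)^2$ with $\epsilon$ as small as you like, so switching to Chebyshev (exactly as the paper does) gives $\Pr(T\ge\tfrac12\E T)\ge\tfrac12$ and the constant $\tfrac14$ follows.
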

	
	\begin{proof}
		Since the random variables are non-negative, for all $t \ge 0$ we have
		\begin{align*}
		 \E \|S\|_p = \E \left(\sum_i S_i^p \right)^{1/p} \ge t^{1/p} \cdot \Pr\bigg(  \sum_i  S_i^p \ge t\bigg).
		\end{align*}
	It then suffices to show that with probability $\frac{1}{2}$ we have $\sum_i S^p_i \ge \frac{1}{2} \sum_i \E S_i^p$. For this, it suffices to upper bound by $\frac{1}{2}$ the probability that $\sum_i S^p_i \notin (1 \pm \frac{1}{2}) \sum_i \E S_i^p$. Let $\mu_i = \E S_i^p$ (note the exponent) and $\mu = \sum_i \mu_i$. Since the $S_i^p$'s are independent, we can apply Chebychev's Inequality to their sum to get 
		\begin{align}
			\Pr \left(\sum_i S_i^p \notin \left(1 \pm \frac{1}{2}\right) \mu\right) \le 4 \cdot \frac{\sum_i \Var(S_i^p)}{\mu^2} \label{eq:convJensen20}
		\end{align}
		
	To upper bound the variance of $S_i^p$, first note that $\Var(S_i^p) = \E S_i^{2p} - (\E S_i^p)^2 \le \E S_i^{2p}$. To upper bound the right-hand side, the idea is to use the moment comparison Theorem \ref{thm:compMoments} to obtain $\E S_i^{2p} \lesssim 2^p (\E S_i^p)^2 \le 2^p \E S_i^p$ (the last inequality by assumption).	More precisely, for each $i$ let $M_i = \max_j X_{i,j}$ denote the largest component of $S_i$ (in each scenario); applying Theorem \ref{thm:compMoments} with $q = 2p$ we have 
		\begin{align*}
			\E S_i^{2p} \le c^{2p} \cdot 2^{2p} \left[\left( \E S_i^p \right)^{1/p} + \left(\E M_i^{2p} \right)^{1/2p} \right]^{2p} \le c^{2p} \cdot 2^{4p} \left[\left( \E S_i^p \right)^2 + \E M_i^{2p} \right],
		\end{align*}
		where the last inequality uses $(a+b)^q \le (2\max\{a,b\})^q \le 2^q (a^q + b^q)$. Moreover, the assumption $\E S_i^p \le 1$ implies that $(\E S_i^p)^2 \le \E S_i^p$, and the assumption $X_{i,j} \in [0,1]$ implies $M_i^{2p} \le M_i^p \le S_i^p$. Therefore, we obtain that $\Var(S_i^p) \le 2c^{2p} \cdot 2^{4p} \cdot \E S_i^p$. Moreover, by assumption $\mu \ge \frac{1}{\alpha^p}$, and hence $\mu^2 \ge \frac{1}{\alpha^p}\,\sum_i \E S_i^p$. Employing these bound, we obtain that the right-hand side of \eqref{eq:convJensen20} is at most $8 \alpha^{p} \cdot c^{2p} \cdot 2^{4p}$. But for $\alpha$ a sufficiently small constant ($1/(c^2 2^8)$ suffices), this upper bound is at most $\frac{1}{2}$. This concludes the proof. 
	\end{proof}
	
	We will also need the following corollary, which is essentially Claim 3 of~\cite{anupamSLB} with a different parametrization; its proof is presented in Appendix \ref{app:simpleConv}.
	
	
	\begin{cor} \label{cor:simpleConv}
		Consider a scalar $\O$, let $\alpha$ be the constant in Theorem \ref{thm:compMoments}. Let $\{X_{i,j}\}_{i,j}$ be independent random variables in $[0,\alpha \O]$, and let $S_i := \sum_j X_{i,j}$. If $\E \|S\|_p \le \O$, then $\sum_{i,j} \E X_{i,j}^p \le (4\O)^p$.
	\end{cor}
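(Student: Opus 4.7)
My plan is to prove the contrapositive, combining a one-line pointwise comparison with Lemma~\ref{lemma:convJensen2} applied to the individual $X_{i,j}$'s, each viewed as a singleton ``machine''. Concretely, assume $\sum_{i,j} \E X_{i,j}^p > (4\O)^p$; I will deduce $\E \|S\|_p > \O$.

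First I would observe that since all variables are non-negative and $p \ge 1$, super-additivity of $t \mapsto t^p$ gives
\[
S_i^p \;=\; \bigg(\sum_j X_{i,j}\bigg)^p \;\ge\; \sum_j X_{i,j}^p \quad \text{pointwise,}
\]
whence $\|S\|_p \ge \|(X_{i,j})_{i,j}\|_p$ pointwise and thus also in expectation. So it suffices to show that $\E \|(X_{i,j})_{i,j}\|_p > \O$.

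Next I would rescale, setting $X'_{i,j} := X_{i,j}/(\alpha \O) \in [0,1]$, and apply Lemma~\ref{lemma:convJensen2} to the family $\{X'_{i,j}\}$, treating each pair $(i,j)$ as its own ``machine'' whose load is the single summand $X'_{i,j}$ itself. The required hypotheses hold: $\E (X'_{i,j})^p \le \E X'_{i,j} \le 1$ since $X'_{i,j} \in [0,1]$ and $p \ge 1$, and by the contrapositive assumption
\[
\bigg(\sum_{i,j} \E (X'_{i,j})^p\bigg)^{1/p} \;=\; \frac{1}{\alpha \O}\bigg(\sum_{i,j} \E X_{i,j}^p\bigg)^{1/p} \;>\; \frac{4}{\alpha} \;\ge\; \frac{1}{\alpha}.
\]
The lemma then yields $\E \|(X'_{i,j})_{i,j}\|_p > \tfrac{1}{4}\bigl(\sum_{i,j} \E (X'_{i,j})^p\bigr)^{1/p}$, and rescaling back by $\alpha \O$ gives $\E \|(X_{i,j})_{i,j}\|_p > \tfrac{1}{4}\bigl(\sum_{i,j} \E X_{i,j}^p\bigr)^{1/p} > \O$, completing the contradiction.

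There is no real obstacle here: the only thing to check is that the ``sum of small random variables'' structure assumed by Lemma~\ref{lemma:convJensen2} is trivially satisfied in the degenerate case where each machine contains a single job, so the proof reduces to picking the right rescaling factor $\alpha \O$ so that the singleton variables land in $[0,1]$ and the $1/\alpha$ lower bound on the $p$-moment sum is witnessed by the contrapositive hypothesis.
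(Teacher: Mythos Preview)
Your proof is correct and follows essentially the same approach as the paper: both prove the contrapositive, use the pointwise bound $\|S\|_p \ge \|(X_{i,j})_{i,j}\|_p$, rescale by $\alpha\O$, and apply Lemma~\ref{lemma:convJensen2} to the singleton family $\{X_{i,j}/(\alpha\O)\}_{i,j}$ to obtain the contradiction. Your write-up is slightly more explicit (e.g., spelling out the super-additivity of $t\mapsto t^p$ for the pointwise comparison), but there is no substantive difference.
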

	

	
	\subsection{Using the L-function method to control the $\ell_p$-norm} \label{sec:lFuncLp}
	
	\paragraph{A first attempt.}	Despite having the bound from the previous section it may still not be clear how we can use it to write an LP/IP that yields a good approximation for \stocps. We sketch a (failed) attempt of how we could try to proceed. Again consider independent RV's $\{X_{i,j}\}_{i,j}$ (e.g., assignment of jobs to machines) and let $S_i = \sum_j X_{i,j}$ (load of machine $i$). We claim that if $\E \|S\|_p \le \OPT$, then $\sum_i \sum_j \lf_{100 \OPT/m^{1/p}}(X_{i,j}) \le m$ (i.e., the optimal assignment satisfies these constraints). We informally sketch why this is the case under simplifying assumptions (let $\tilde{\lf}_\e(S_i) := \sum_j \lf_\e(X_{i,j})$):
	
	\begin{enumerate}
		\item By contradiction, suppose $\sum_i \tilde{\lf}_{100\OPT/m^{1/p}}(S_i) > m$, and \emph{assume} that for all $i$ we have \linebreak$\tilde{\lf}_{100\OPT/m^{1/p}}(S_i) \le O(1)$
		\item This implies that for $\Omega(m)$ indices $i$ we have $\tilde{\lf}_{100\OPT/m^{1/p}}(S_i) \ge 1$; recall this is around the ``right'' condition to apply the results from the L-function method
		\item More precisely, Lemma \ref{lemma:LBF} implies that for each such $i$ we have $\E S_i^p \ge (\frac{100\OPT}{10 m^{1/p}})^p = \frac{10^p \OPT^p}{m}$
		\item \emph{Assuming} the requirements of Lemma~\ref{lemma:convJensen2} are met, we can use it to obtain $\E \|S\|_p > \frac{1}{4} (\Omega(m) \frac{10^p \OPT^p}{m})^{1/p} > \OPT$ (the last inequality holds if we adjust the constants properly). This reaches the desired contradiction.  
	\end{enumerate}
	In fact, one can apply this argument to any subset $K \subseteq [m]$ of coordinates to obtain that
	\begin{align}
		\sum_{i \in K} \sum_j \lf_{100 \OPT/|K|^{1/p}}(X_{i,j}) \le |K|. \label{eq:everySubset}
	\end{align}
	Therefore, after guessing $\OPT$, we can write the following IP enforcing these restrictions and be assured that the optimal solution is feasible for it:
	\begin{align*}
		&\sum_{i \in K} \sum_{\ell}  \lf_{100\OPT/|K|^{1/p}}(\J_{i,\ell})\cdot x_{i,\ell} \le  |K|~~~~~\forall K \subseteq [m]\\
		&x \in \textrm{assignment polytope} \cap \{0,1\}^{m \times n}.
	\end{align*}
	In turn, suppose we can use this IP to obtain an integral assignment satisfying \eqref{eq:everySubset} (approximately). Then we can try to use the moment control from Lemma \ref{lemma:UBF} and the Jensen's-based inequality \eqref{eq:jensen} to reverse the process and argue that our solution has expected $\ell_p$-load $O(\OPT)$. For that, since the L-function method is most effective when $\e$ is such that $\tilde{\lf}_\e(S_i) \approx 1$, we can use the following idea inspired by \cite{anupamSLB}: assign to each machine $i$ a size $k$ such that $\tilde{\lf}_{\OPT/k^{1/p}}(S_i) \approx 1$ (or equivalently, $\E S_i^p \approx \frac{\OPT^p}{k}$); since by \eqref{eq:everySubset} there are at most $\approx k$ machines assigned to size $k$, we can hope to prove that $\sum_i \E S_i^P \lesssim \OPT^p$ and from the Jensen's based inequality obtain $\E \|S\|_p \lesssim \OPT$. Unfortunately this argument is not enough because the former inequality is not true: for $j = 1,\ldots, \ln m$ we could have $\frac{m}{2^j}$ machines with $\E S_i^p \approx \frac{\OPT^p}{m/2^j}$, thus assigned to size $\frac{m}{2^j}$, and get $\sum_i \E S_i^p \approx \sum_j \frac{m}{2^j} \frac{\OPT^p}{m/2^j} \approx (\ln m) \OPT^p$.

	\paragraph{Multi-scale bound.} The logarithmic loss in the previous example comes form the fact we grouped the machines with similar scale of moment $\E S_i^p$ and applied the upper bound \eqref{eq:everySubset} \emph{separately} for each group. To avoid this loss we will then obtain a more refined upper bound that takes into account all scales \emph{simultaneously}.

	\begin{thm} \label{thm:mainCstr}
	 Consider a scalar $\O$ and a sufficiently small constant $\alpha$. Consider independent random variables $\{X_{i,j}\}_{i,j}$ in $[0,\alpha \O]$. Let $S_i = \sum_j X_{i,j}$, and suppose $\E \|S\|_p \le \O$. Consider the scaled down variables $\tilde{X}_{i,j} := \frac{X_{i,j}}{44}$.
  Then for any sequence of values $v_1,\ldots,v_m \ge (1/\alpha)^p$, we have
		%
		\begin{align}
			\sum_i \frac{1}{v_i}\, \left(\sum_j \lf_{\O/v_i^{1/p}} \big(\tilde{X}_{i,j}\big) - 1\right) \le 3. \label{eq:mainCst}
		\end{align}
	\end{thm}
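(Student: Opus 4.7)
The plan is to prove the theorem by contrapositive: assume $\sum_i (b_i - 1)/v_i > 3$, where I write $\tilde{S}_i := \sum_j \tilde{X}_{ij} = S_i/44$ and $b_i := \sum_j \lf_{\O/v_i^{1/p}}(\tilde{X}_{ij})$, and then derive $\E\|S\|_p > \O$, contradicting the hypothesis. The skeleton mirrors the four-step template from the ``first attempt'' sketch of Section~\ref{sec:lFuncLp}: use Lemma~\ref{lemma:LBF} to turn each sizable $b_i$ into a lower bound on $\E \tilde{S}_i^p$; aggregate across machines to bound $\sum_i \E \tilde{S}_i^p$ from below; then invoke the converse Jensen Lemma~\ref{lemma:convJensen2} to lift this into $\E\|\tilde{S}\|_p > \O/44$, which contradicts $\E\|\tilde{S}\|_p = \E\|S\|_p/44 \le \O/44$.

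Concretely, I first replace each $b_i$ by $\max(b_i, 1)$ (this only enlarges the LHS, so the violated assumption still holds) and set $t_i := b_i - 1 \ge 0$. Lemma~\ref{lemma:LBF} at $\e = \O/v_i^{1/p}$ immediately gives $\E \tilde{S}_i^p \ge \O^p/(10^p v_i)$ for every $i$. Next I partition machines into dyadic levels $L_k := \{i : v_i \in [2^k, 2^{k+1})\}$, so the violated assumption becomes $\sum_k A_k/2^k > 3$ with $A_k := \sum_{i \in L_k} t_i$; combining with the per-machine moment bound produces a lower bound on $\sum_i \E \tilde{S}_i^p$. The factor $1/44$ in the definition of $\tilde{X}_{ij}$ is engineered so that, after dividing by $\O/44$, these variables lie in $[0,1]$ (using the hypothesis $X_{ij} \in [0, \alpha\O]$), matching the size assumption of Lemma~\ref{lemma:convJensen2}. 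With the aggregate moment bound in place -- and having verified (or handled via case split) the normalization $\E \tilde{S}_i^p \le (\O/44)^p$ -- that lemma produces $\E\|\tilde{S}\|_p > \tfrac{1}{4}\bigl(\sum_i \E \tilde{S}_i^p\bigr)^{1/p} > \O/44$, the desired contradiction.

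The main obstacle will be the aggregation step. A naive level-by-level invocation of Lemma~\ref{lemma:LBF} counts only the machines with $b_i \ge 2$ in each level $L_k$, and summing that bound across dyadic levels costs a logarithmic factor -- precisely the loss the theorem is designed to avoid, as flagged in the discussion preceding the statement. The remedy is to exploit large $t_i$ in a scale-sensitive way: for a machine with $t_i \gg 0$, find instead $\e_i^* \ge \O/v_i^{1/p}$ satisfying $\sum_j \lf_{\e_i^*}(\tilde{X}_{ij}) = 1$ and use the sharper bound $\E \tilde{S}_i^p \ge (\e_i^*/10)^p$ from Theorem~\ref{thm:lFunc}; the associated ``effective scale'' $v_i^* := (\O/\e_i^*)^p$ then lies strictly below $v_i$, with the ratio quantitatively controlled by how far $b_i$ exceeds $1$. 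Aggregating on these effective scales rather than on the input $v_i$'s, and verifying the normalization hypothesis of Lemma~\ref{lemma:convJensen2} (or separately handling the case where some individual $\E \tilde{S}_i^p$ already exceeds $(\O/44)^p$, in which case $\E\|\tilde{S}\|_p$ can be bounded directly) is where I expect the bulk of the technical work to reside.
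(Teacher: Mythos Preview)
Your contrapositive skeleton matches the paper's, but the core technical device is missing. The paper does \emph{not} try to apply Lemma~\ref{lemma:convJensen2} to the loads $(\tilde{S}_i)_i$ themselves; instead, for each machine $i$ it \emph{partitions} $\tilde{S}_i$ into subsums $\tilde{S}_i^1,\ldots,\tilde{S}_i^{k_i}$, each with $\lf_{\e_i}$-mass in $[1,2]$ (plus a discarded remainder of mass $\le 1$), and then applies Lemma~\ref{lemma:convJensen2} to the longer vector $(\tilde{S}_i^w)_{i,w}$, using $\|\tilde{S}\|_p \ge \|(\tilde{S}_i^w)_{i,w}\|_p$. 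This splitting buys two things simultaneously: the aggregation is now trivial (since $k_i \gtrsim b_i - 1$ and each piece has $p$-moment $\ge (\e_i/10)^p$, one gets $\sum_i k_i/v_i > 1$ directly from the violated inequality), and the normalization hypothesis of Lemma~\ref{lemma:convJensen2} holds automatically (each piece has mass $\le 2$, hence $\E(\tilde{S}_i^w)^p \le e^{2p}\e_i^p \le (e^2\alpha\O)^p$, so after scaling down to equality $\E(\tilde S_i^w)^p=(\e_i/10)^p\le(\alpha\O/10)^p$). To make the splitting possible one must first strip out ``big items'' with individual $\lf_{\e_i}$-mass $>1$; the paper bounds their total contribution to the left-hand side of \eqref{eq:mainCst} by $1$ via Corollary~\ref{cor:simpleConv} and the relation $\lf_{\e_i}(\tilde{X}_{i,j}) \lesssim \E\tilde{X}_{i,j}^p/\e_i^p$ valid for big items.

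Your proposed alternatives do not close the gap. The ``effective scale'' $\e_i^*$ gives only $\E\tilde{S}_i^p \ge (\e_i^*/10)^p$, and there is no $p$-independent inequality of the form $(\e_i^*/\e_i)^p \gtrsim b_i-1$ (try many small items with $b_i=2$: then $\e_i^*\approx 2\e_i$, so the ratio is $\approx 2^p$, not comparable to $10^p$). More fatally, your case split ``if some $\E\tilde{S}_i^p > (\O/44)^p$ then bound $\E\|\tilde{S}\|_p$ directly'' is exactly the regime Lemma~\ref{lemma:convJensen2} was designed to exclude: a single coordinate with large $p$-moment does \emph{not} force large expected norm (this is the Poisson phenomenon discussed before the lemma), and invoking Theorem~\ref{thm:compMoments} to relate $(\E\tilde{S}_i^p)^{1/p}$ to $\E\tilde{S}_i$ costs a factor $p/\ln p$, precisely what the theorem must avoid. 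The subsum-splitting trick is what sidesteps this, by ensuring no single coordinate ever has large moment.
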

	
\noindent (For example, when $v_i = m$ for all $i$ this corresponds roughly to the bound \eqref{eq:everySubset} with $K = [m]$.) 
	
	\begin{proof}
 To simplify the notation let $\e_i := \frac{\O}{v_i^{1/p}}$ and define $\tilde{S}_i := \sum_j \tilde{X}_{i,j}$. The high-level idea is to show that if \eqref{eq:mainCst} does not hold then Lemmas \ref{lemma:LBF} and \ref{lemma:convJensen2} imply that $\E \|S\|_p > \O$, contradicting our assumption. To apply the former lemma effectively we need to break up the sums $\tilde{S}_i$ into subsums with $\lf_{\e_i}$-mass $\approx 1$; for that, we need to take care of $\tilde{X}_{i,j}$'s with big $\lf_{\e_i}$-mass first. 

 	 For each machine $i$, let $B_i$ be the set of indices $j$ such that $\lf_{\e_i}(\tilde{X}_{i,j}) > 1$ (``big items''). 
%
%
%
%
We need to show that the big items do not contribute much to \eqref{eq:mainCst}. 
		
		\begin{claim}
		 $\sum_i \frac{1}{v_i} \sum_{j \in B_i}  \lf_{\e_i}(\tilde{X}_{i,j}) \le 1.$ 
		\end{claim}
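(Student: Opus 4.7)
The plan is to show that each ``big'' item's $\lf_{\e_i}$-mass is already dominated by its own $p$-th moment (in the right units) and then invoke the global moment budget supplied by Corollary~\ref{cor:simpleConv}.

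The key deterministic ingredient I would set up first is the bound $\lf_\e(X) \le \vertiii{X}_p/\e$ for a non-negative $X$. This follows by writing $\lf_\e(X) = \ln \vertiii{1 + X/\e}_p$, applying Minkowski's inequality to get $\vertiii{1 + X/\e}_p \le 1 + \vertiii{X}_p/\e$, and then using $\ln(1+y) \le y$.

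With this in hand, for each $j \in B_i$ the defining property $\lf_{\e_i}(\tilde{X}_{i,j}) > 1$ forces $\vertiii{\tilde{X}_{i,j}}_p/\e_i > 1$. Since $p \ge 1$, any quantity exceeding $1$ is bounded above by its $p$-th power, so the base bound can be bootstrapped to
$$\lf_{\e_i}(\tilde{X}_{i,j}) \;\le\; \vertiii{\tilde{X}_{i,j}}_p/\e_i \;\le\; \bigl(\vertiii{\tilde{X}_{i,j}}_p/\e_i\bigr)^p \;=\; \E\tilde{X}_{i,j}^p\,/\,\e_i^p \;=\; \frac{v_i\,\E \tilde{X}_{i,j}^p}{\O^p},$$
where I used $\e_i^p = \O^p/v_i$. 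The $v_i$ factor cancels the outer $1/v_i$ weight exactly, giving
$$\sum_i \frac{1}{v_i}\sum_{j \in B_i}\lf_{\e_i}(\tilde{X}_{i,j}) \;\le\; \frac{1}{\O^p}\sum_{i,j}\E \tilde{X}_{i,j}^p \;=\; \frac{1}{(44\O)^p}\sum_{i,j}\E X_{i,j}^p.$$
Finally, applying Corollary~\ref{cor:simpleConv} to the original variables $X_{i,j} \in [0,\alpha\O]$ with $\E\|S\|_p \le \O$ yields $\sum_{i,j}\E X_{i,j}^p \le (4\O)^p$, so the right-hand side is at most $(4/44)^p \le 1/11 < 1$.

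The main point to get right is the conversion from the mild Minkowski bound $\lf_\e(X) \le \vertiii{X}_p/\e$ to its sharper $p$-th power form: only then does the quantity on the right become the one actually controlled by the hypothesis $\E\|S\|_p \le \O$. The bootstrap is legitimate precisely because we restrict to $B_i$, where the big-item condition guarantees the base bound exceeds $1$. The scaling by $44$ in the definition of $\tilde{X}_{i,j}$, together with the factor $4$ from Corollary~\ref{cor:simpleConv}, is exactly the slack that lands the final constant comfortably below $1$.
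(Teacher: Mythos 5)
Your proof is correct but takes a genuinely different route at the key technical step. The paper bounds $\lf_{\e_i}$ via the crude expansion $(a+b)^p \le 2^p(a^p+b^p)$ to get $\lf_{\e_i}(X) \le 1 + \frac{1}{p\e_i^p}\,\E X^p$, then for big items invokes Lemma~\ref{lemma:LBF} to absorb the additive $1$ into a further multiple of $\E X^p/\e_i^p$, arriving at $\lf_{\e_i}(\tilde{X}_{i,j}) \le \frac{11^p}{\e_i^p}\E\tilde{X}_{i,j}^p$. You instead derive the tighter bound $\lf_\e(X) \le \vertiii{X}_p/\e$ directly from Minkowski and $\ln(1+y)\le y$, and then bootstrap to the $p$-th power using only the elementary fact $a \le a^p$ for $a \ge 1,\ p\ge 1$ --- legitimate on $B_i$ because your own Minkowski bound, read contrapositively, certifies $\vertiii{\tilde{X}_{i,j}}_p/\e_i > 1$. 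This bypasses both the $2^p$ and the $10^p$ slack, so there is no $11^p$ to cancel: you land on $(4/44)^p \le 1/11$ rather than the exact $1$ the paper hits (the paper's choice of $44 = 4\cdot 11$ in the definition of $\tilde{X}$ is tuned precisely to offset its $11^p$). Both arguments are sound; yours is shorter, gives a stronger constant, and is more self-contained since it does not need Lemma~\ref{lemma:LBF} inside the claim.
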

		
		\begin{proof}
			\renewcommand\qedsymbol{$\diamond$}
			First, from Corollary \ref{cor:simpleConv} we have that $\sum_{i,j} \E X_{i,j}^p \le (4\O)^p$, so passing to the tilde version and restricting to the big items we have
			\begin{align}
			\sum_{i,j \in B_i} \E \tilde{X}_{i,j}^p \le \frac{\O^p}{11^p}. \label{eq:bigItems}
			\end{align}
		Moreover, for the big items we can relate $\lf_{\e_i}(\tilde{X}_{i,j})$ and $\E \tilde{X}_{i,j}^p$. For that, again recall $(a+b)^p \le (2\max\{a,b\})^p \le 2^p(a^p + b^p)$; so expanding the definition of $\lf_{\e_i}$ we have for any random variable $X$
		\begin{align*}
			\lf_{\e_i}(X) \le \frac{1}{p} \ln\left[2^p \left(1 + \E \left(\frac{X}{\e_i}\right)^p \right) \right] \le 1 + \frac{1}{p} \ln\left[1 + \E \left(\frac{X}{\e_i}\right)^p \right] \le 1 + \frac{1}{p} \E \left(\frac{X}{\e_i}\right)^p = 1 + \frac{1}{p} \frac{1}{\e_i^p}\, \E X^p,
		\end{align*}
		where the last inequality uses that $\ln(1+x) \le x$ for all $x$. Moreover, for any big item we have by definition $\lf_{\e_i}(\tilde{X}_{i,j}) > 1$, so Lemma \ref{lemma:LBF} gives $\E (\tilde{X}_{i,j})^p \ge (\frac{\e_i}{10})^p$, or equivalently $1 \le \frac{10^p}{\e_i^p} \E (\tilde{X}_{i,j})^p$. Applying this bound and the displayed inequality to $\tilde{X}_{i,j}$, we can relate 	$\lf_{\e_i}(\tilde{X}_{i,j})$ to $\E \tilde{X}_{i,j}^p$: 
		\begin{align*}
			\lf_{\e_i}(\tilde{X}_{i,j}) \le \left(\frac{10^p}{\e_i^p} + \frac{1}{p} \frac{1}{\e_i^p}\right)\cdot \E (\tilde{X}_{i,j})^p \le \frac{11^p}{\e_i^p}\cdot \E (\tilde{X}_{i,j})^p = \frac{11^p v_i}{\O^p}\cdot\E (\tilde{X}_{i,j})^p.
		\end{align*}
		Dividing by $v_i$, adding this inequality over all big items, and employing \eqref{eq:bigItems} then concludes the proof. 
	\end{proof}
	
	\medskip
		Now assume by contradiction assume that \eqref{eq:mainCst} does not hold. Given this, and using the previous claim, if we remove the big items $\bigcup_i B_i$ from consideration we still have $\sum_i \frac{1}{v_i} (\sum_{j \notin B_i} \lf_{\e_i}(\tilde{X}_{i,j}) - 1) > 2$. So ignore the big items; to simplify the notation, we just assume there are no big items. Since $\lf_{\e_i}(\tilde{X}_{i,j}) \le 1$ for the remaining items, we can partition the sum $\tilde{S}_i = \sum_j \tilde{X}_{i,j}$ into subsums $\tilde{S}^0_i,\tilde{S}^1_i, \ldots, \tilde{S}^{k_i}_i$ such that $\tilde{S}^w_i$ has $\lf_{\e_i}$-mass in $[1,2]$ for all $w \ge 1$ and the exceptional sum $\tilde{S}^0_i$ has $\lf_{\e_i}$-mass at most $1$; formally we consider a partition $J_0,J_1,\ldots,J_{k_i}$ of the index set of $\{\tilde{X}_{i,j}\}_j$ such that $\tilde{S}^w_i := \sum_{j \in J_w} \tilde{X}_{i,j}$ has $\tilde{\lf}_{\e_i}(\tilde{S}^w_i) := \sum_{j \in J_w} \lf_{\e_i}(\tilde{X}_{i,j}) \in [1,2]$ for all $w \ge 1$, and $\tilde{\lf}_{\e_i}(\tilde{S}^0_i) \le 1$.
		
	Again $\|\tilde{S}\|_p$ can be lower bounded by ignoring the exceptional sums $\{\tilde{S}^0_i\}_i$ and assigning each of the other sums to their own coordinate, so 
	\begin{align}
		\E \|\tilde{S}\|_p \ge \E \bigg( \sum_{i,w \ge 1} (\tilde{S}^w_i)^p  \bigg)^{1/p}.  \label{eq:subload1}
	\end{align}
	We now lower bound the right-hand side using Theorem \ref{lemma:convJensen2}. First, using Lemma \ref{lemma:LBF} we have $\E (\tilde{S}^w_i)^p \ge (\frac{\e_i}{10})^p = \frac{\O^p}{10^p}\cdot \frac{1}{v_i}$. 
By scaling the $\tilde{S}^w_i$'s down if necessary, assume this holds at equality: $\E (\tilde{S}^w_i)^p = \frac{\O^p}{10^p}\cdot \frac{1}{v_i}$. Adding over all $i,w$,  
		\begin{align}
			\sum_{i,w\ge 0} \E (\tilde{S}^w_i)^p = \frac{\O^p}{10^p} \sum_i \frac{k_i}{v_i}. \label{eq:subload2}
		\end{align}
		In addition, the mass discounting the exceptional sums is at least 2:
	\begin{align*}
		\sum_i \frac{1}{v_i} \sum_{w = 1}^{k_i} \tilde{\lf}_{\e_i}(\tilde{S}^w_i) \ge \sum_i \frac{1}{v_i} \bigg(\sum_j \lf_{\e_i}(\tilde{X}_{i,j}) - 1\bigg) > 2.
	\end{align*}
	Since the $\tilde{\lf}_{\e_i}$'s in the left-hand side are at most 2, this implies that $\sum_i \frac{k_i}{v_i} > 1$. So applying this to \eqref{eq:subload2} we get $$\sum_{i,w\ge 0} \E (\tilde{S}^w_i)^p \ge \frac{\O^p}{10^p}.$$ Furthermore, since we assumed $v_i \ge (1/\alpha)^p$, we have $$\E (\tilde{S}^w_i)^p =  \frac{\O^p}{10^p}\cdot \frac{1}{v_i} \le \left(\frac{\alpha \O}{10}\right)^p.$$ But then applying Lemma \ref{lemma:convJensen2} to the $\frac{10}{\alpha \O} \tilde{S}^w_i$'s we get 
	\begin{align*}
		\E \bigg( \sum_{i,w \ge 1} (\tilde{S}^w_i)^p  \bigg)^{1/p} > \frac{1}{4} \bigg(\sum_{i,w \ge 1} \E (\tilde{S}^w_i)^p \bigg)^{1/p} \ge \frac{1}{4} \cdot \frac{\O}{10}. 
	\end{align*}
	Using \eqref{eq:subload1} and recalling that $S = 44 \cdot \tilde{S}$, we get $\E \|S\|_p > \O$, which contradicts the assumption $\E \|S\|_p \le \O$. This concludes the proof. 
	\end{proof}

	
	\paragraph{Converse bound.} Crucially, we need a converse to the previous theorem: if inequalities \eqref{eq:mainCst} are satisfied, then the $\ell_p$-norm of the loads is at most $O(\O)$. Indeed, one can show the following (with the additional control of the $\ell_\infty$-norm). 

	\begin{thm} \label{thm:mainConverse}
			 Consider a scalar $\O$ and a sufficiently small constant $\alpha$. Consider independent random variables $\{X_{i,j}\}_{i,j}$ in $[0,\alpha \O]$, and let $S_i = \sum_j X_{i,j}$. Suppose these random variables satisfy \eqref{eq:mainCst} for all sequences $v_1,\ldots,v_m \ge (1/\alpha)^p$. Also assume $\E \|S\|_\infty \le O(\O)$. Then $$\E \|S\|_p \le O(\O).$$
	\end{thm}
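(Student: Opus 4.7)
My plan is to apply Jensen's inequality to reduce to bounding $\sum_i \E S_i^p \le O(\O^p)$, then to bound each term using Lemma \ref{lemma:UBF} with a per-machine scale $v_i$ plugged into the family of constraints \eqref{eq:mainCst}; the $\ell_\infty$ assumption will be used only to handle the machines whose natural scale lies outside the constraint's feasible range $v \ge (1/\alpha)^p$. Concretely, for each $i$ the function $v \mapsto \sum_j \lf_{\O/v^{1/p}}(\tilde{X}_{i,j})$ is continuous and increasing in $v$ (since $\lf_\e$ decreases in $\e$), so I split machines into an \emph{in-range} case ($\sum_j \lf_{\alpha\O}(\tilde{X}_{i,j}) < 2$, where I pick the unique $v_i > (1/\alpha)^p$ making this sum equal $2$) and a \emph{heavy} case ($\sum_j \lf_{\alpha\O}(\tilde{X}_{i,j}) \ge 2$). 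In the in-range case, Lemma \ref{lemma:UBF} applied to $\tilde{S}_i$ with $\e = \O/v_i^{1/p}$, followed by rescaling by $44$, gives $\E S_i^p \le (44 e^2 \O)^p/v_i$. Feeding this $v$-sequence into \eqref{eq:mainCst} — with arbitrarily large placeholders for the heavy machines, whose contribution to the left-hand side vanishes in the limit since $(1/v)\sum_j \lf_{\O/v^{1/p}}(\tilde{X}_{i,j}) \to 0$ as $v \to \infty$ for bounded items — yields $\sum_{i \text{ in-range}} 1/v_i \le 3$ and hence $\sum_{i \text{ in-range}} \E S_i^p \le O(\O^p)$.

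The main obstacle is bounding $\sum_{i \text{ heavy}} \E S_i^p$, since by Lemma \ref{lemma:LBF} each heavy machine has $(\E S_i^p)^{1/p} \ge \Omega(\alpha\O)$ and a pure L-function argument cannot cap their total. Here the $\ell_\infty$ assumption is essential: one route is to apply \eqref{eq:mainCst} with the uniform choice $v_i = (1/\alpha)^p$ to bound the number of heavy machines by $O((1/\alpha)^p)$, and then to exploit the item bound $X_{i,j} \le \alpha\O$ together with $\E\|S\|_\infty \le O(\O)$ via concentration (say Hoeffding applied to each $S_i$ followed by a union bound over the machines) to upgrade the expectation bound on $\|S\|_\infty$ into a moment bound $\E\|S\|_\infty^p \le O(\O^p)$; using $\E S_i^p \le \E\|S\|_\infty^p$ per heavy machine then gives $\sum_{i \text{ heavy}} \E S_i^p \le O(\O^p)$. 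The hardest step is this quantitative conversion, i.e.\ turning a single expectation bound on $\|S\|_\infty$ into control over a sum of $p$-moments across many heavy machines; I expect it goes through cleanly by tracing the $\lf$-structure of the heavy loads rather than via a generic concentration argument. Combining both cases with Jensen then yields $\E\|S\|_p \le O(\O)$.
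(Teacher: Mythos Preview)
Your in-range argument is correct and matches the paper's approach: pick the per-machine scale $\bar v_i$ so that $\sum_j \lf_{\O/\bar v_i^{1/p}}(\tilde X_{i,j})=2$, apply Lemma~\ref{lemma:UBF} to bound $\E S_i^p \le (44e^2\O)^p/\bar v_i$, and plug this sequence (with large placeholders elsewhere) into \eqref{eq:mainCst} to get $\sum_{i\text{ in-range}} 1/\bar v_i\le 3$. Your bound $|\{\text{heavy}\}|\le O((1/\alpha)^p)$ is also right, once you again use large placeholders for the non-heavy machines rather than the literal uniform choice (otherwise the in-range machines with $\sum_j\lf_{\alpha\O}(\tilde X_{i,j})<1$ contribute negatively and spoil the count).

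The genuine gap is in the heavy case. Your plan commits to the Jensen route all the way, i.e.\ to bounding $\sum_{i\text{ heavy}} \E S_i^p \le O(\O^p)$, and for this you want to upgrade $\E\|S\|_\infty\le O(\O)$ to $\E\|S\|_\infty^p\le O(\O^p)$ with the implied constant independent of $p$. This is \emph{false} under the theorem's hypotheses. Take a single machine whose $n$ items are each $\alpha\O$ with probability $1/n$ and $0$ otherwise; then $S_1\approx \alpha\O\cdot\mathrm{Poisson}(1)$, so $\E\|S\|_\infty=\E S_1=\alpha\O$, and for small $\alpha$ one checks \eqref{eq:mainCst} holds for all admissible $v_1$. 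Yet $(\E S_1^p)^{1/p}\approx \alpha\O\cdot p/\ln p$, which is not $O(\O)$ for fixed $\alpha$ and large $p$ (this is precisely the Poisson example from Section~\ref{sec:lpMoments}). So neither Hoeffding nor any ``tracing of the $\lf$-structure'' can yield $\E\|S\|_\infty^p\le O(\O^p)$; the statement is simply wrong, and with it the Jensen reduction on the heavy coordinates.

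The fix the paper uses (the full argument is carried out in the proof of Theorem~\ref{thm:roundNeed}; the text after Theorem~\ref{thm:mainConverse} is only a sketch under a simplifying assumption that removes the heavy case) is to \emph{abandon Jensen on the heavy coordinates} and instead apply the pointwise $\ell_p$--$\ell_\infty$ comparison before taking any powers: since $|\{\text{heavy}\}|\le O((1/\alpha)^p)$, for every outcome
\[
\|S_{\text{heavy}}\|_p \;\le\; |\{\text{heavy}\}|^{1/p}\,\|S_{\text{heavy}}\|_\infty \;\le\; O(1/\alpha)\,\|S\|_\infty,
\]
and taking expectations together with the hypothesis $\E\|S\|_\infty\le O(\O)$ gives $\E\|S_{\text{heavy}}\|_p\le O(\O)$ directly. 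Combine this with your in-range bound via the triangle inequality to conclude.
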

	
	We sketch a proof under simplifying assumptions (in which case we do not even need the condition $\E \|S\|_\infty \le O(\O)$); while we will actually require a modified version of this theorem, the simplified proof is helpful to provide intuition. 
	
	\begin{proof}[Proof idea of Theorem \ref{thm:mainConverse}]
		\emph{Assume} the following slightly stronger version of \eqref{eq:mainCst} holds for \emph{all} sequences $(v_i)_i$: $\sum_i \frac{1}{v_i} \sum_j \lf_{\O/v_i^{1/p}} \big(\tilde{X}_{i,j}\big) \le 3$. Applying this to the sequence $(\bar{v}_i)_i$ where $\bar{v}_i$ is such that $\sum_j \lf_{\O/\bar{v}_i^{1/p}} \big(\tilde{X}_{i,j}\big) \approx 1$, we get $\sum_i \frac{1}{\bar{v}_i} \lesssim 3$. , By Theorem \ref{thm:lFunc} $\frac{\O^p}{\bar{v}_i} \approx \E \tilde{S}_i^p$, and so we get $\sum_i \E \tilde{S}_i^p \lesssim 3 \O^p$, and inequality \eqref{eq:jensen} then gives $\E \|S\|_p \le O(\O)$, concluding the proof.
	\end{proof}	
	
	The issue with this theorem is that it will be hard to satisfy inequality \eqref{eq:mainCst} for all the allowed sequences $(v_i)_i$ later when we round our Linear Program. However, note that in the proof of this theorem we only needed this inequality to hold for a \emph{single} sequence $(\bar{v}_i)_i$ with specific properties, which will be easier to achieve. We will abstract out the properties needed. Actually, for technical reasons (controlling the size of the coefficients in the rounding phase of our algorithm) we will need to work with a capped version of $\lf$: $$\lfT_\e(X) := \min\{1, \lf_\e(X)\}.$$ In order to offset the loss introduced by this capping, we will also need a ``coarse control'' of the random variables (the result below holds without this coarse control if one uses $\lf$ instead of $\lfT$). Following~\cite{anupamSLB}, we will also use the effective size \eqref{eq:effectiveSize} to control the $\ell_\infty$-norm. This is then our main converse bound, whose proof is deferred to Appendix \ref{app:roundNeed}.
		
	\begin{thm} \label{thm:roundNeed}
		Consider a scalar $\O$ and a sufficiently small constant $\alpha$. Consider independent random variables $\{X_{i,j}\}_{i,j}$ in $[0,\alpha \O]$, and let $S_i = \sum_j X_{i,j}$, and $\tilde{X}_{i,j} = \frac{X_{i,j}}{44}$. Suppose the following hold:
		\begin{enumerate}
			\item ($\ell_p$ control) There exists an integer sequence $\bar{v}_1,\ldots,\bar{v}_m \ge (1/\alpha)^p$ such that for each $i$, either \linebreak$\sum_j \lfT_{\O/\bar{v}_i^{1/p}}(\tilde{X}_{i,j}) \le 10$ or $v_i = (1/\alpha)^p$, and $\sum_i \frac{1}{\bar{v}_i} \le 5$. \vspace{-3pt}
			
			\item (coarse control) $\sum_{i,j}\E  X_{i,j}^p \le O(\O)^p$.\vspace{-3pt}
			
			\item ($\ell_\infty$ control) There exists a sequence $\bar{\ell}_1,\ldots,\bar{\ell}_m \in [m]$ such that $\sum_j \beta_{\bar{\ell}_i}(X_{i,j}/\O) \le \gamma$ for all $i$, for some constant $\gamma$, and for each $\ell \in [m]$ at most $\ell$ of the $i$'s have $\bar{\ell}_i = \ell$.
		\end{enumerate}	
		Then $\E\|S\|_p \le O(\O)$.
	\end{thm}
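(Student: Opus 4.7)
The plan is to decompose the load vector $S$ into three non-negative summands, bound each in expected $\ell_p$-norm, and combine via the triangle inequality. Partition the machines into \emph{regular} ones---those satisfying $\sum_j \lfT_{\O/\bar{v}_i^{1/p}}(\tilde{X}_{i,j}) \le 10$---and \emph{special} ones (which by hypothesis then have $\bar{v}_i = (1/\alpha)^p$). For each regular $i$, let $B_i := \{j : \lf_{\O/\bar{v}_i^{1/p}}(\tilde{X}_{i,j}) > 1\}$ denote the ``big'' items; since each $j \in B_i$ contributes exactly $1$ to $\sum_j \lfT$, we get $|B_i| \le 10$. Define $S^{rs}_i := \sum_{j \notin B_i} X_{i,j}$ and $S^{rb}_i := \sum_{j \in B_i} X_{i,j}$ on regular $i$ (and zero on special machines), and $S^{sp}_i := S_i$ on special machines (zero on regular). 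Then $S = S^{rs} + S^{rb} + S^{sp}$ coordinate-wise, and the triangle inequality reduces the task to bounding each of the three expected norms by $O(\O)$.

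For $\E\|S^{rs}\|_p$, apply Lemma~\ref{lemma:UBF} to $\{\tilde{X}_{i,j}\}_{j \notin B_i}$ at scale $\e = \O/\bar{v}_i^{1/p}$: on items outside $B_i$ we have $\lf_\e = \lfT_\e$, so $\sum_{j \notin B_i} \lf_\e(\tilde{X}_{i,j}) \le \sum_j \lfT_\e(\tilde{X}_{i,j}) \le 10$, and since $S^{rs}_i = 44 \sum_{j \notin B_i} \tilde{X}_{i,j}$ this yields $\E (S^{rs}_i)^p \le (44 e^{10})^p\, \O^p/\bar{v}_i$. Summing, using $\sum_i 1/\bar{v}_i \le 5$, and applying the Jensen-based bound \eqref{eq:jensen} gives $\E\|S^{rs}\|_p \le O(\O)$. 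For $\E\|S^{rb}\|_p$, the power-mean bound $(S^{rb}_i)^p \le |B_i|^{p-1}\sum_{j \in B_i} X_{i,j}^p \le 10^{p-1}\sum_{j \in B_i} X_{i,j}^p$ combined with the coarse control $\sum_{i,j}\E X_{i,j}^p \le O(\O)^p$ gives $\sum_i \E (S^{rb}_i)^p \le 10^{p-1}\,O(\O)^p$, and Jensen again produces $\E\|S^{rb}\|_p \le O(\O)$.

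For the special part, $\sum_i 1/\bar{v}_i \le 5$ and $1/\bar{v}_i = \alpha^p$ on special machines bound their count by $N \le 5/\alpha^p$, so the elementary inequality $\|v\|_p \le N^{1/p}\|v\|_\infty$ yields $\|S^{sp}\|_p \le (5/\alpha)\|S\|_\infty$. The $\ell_\infty$-control hypothesis is exactly the standard ``effective-size with spread'' condition used in~\cite{KRT,anupamSLB} to derive $\E\|S\|_\infty \le O(\O)$; hence $\E\|S^{sp}\|_p \le O(\O)$, and summing the three contributions completes the proof. The main obstacle is that the capped functional $\lfT$ does not feed directly into Lemma~\ref{lemma:UBF}: its $10$-bound must be ``unpacked'' by splitting each regular machine into items with $\lf_\e \le 1$ (controlled by the $L$-function method) and at most $10$ big items (absorbed via coarse control), and the special machines---for which no $\lfT$-bound is assumed---require an entirely separate $\ell_\infty$-based argument.
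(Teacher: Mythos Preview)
Your proof is correct and follows essentially the same approach as the paper: the paper's decomposition $S = S^J_I + S^{J^c}_I + S_{I^c}$ is exactly your $S^{rs} + S^{rb} + S^{sp}$ (with $I$ the regular machines and $J_i^c = B_i$ the capped items), and each piece is bounded by the same three mechanisms---Lemma~\ref{lemma:UBF} plus $\sum_i 1/\bar v_i \le 5$, the cardinality bound $|B_i|\le 10$ plus coarse control, and the $\ell_p$--$\ell_\infty$ comparison on at most $5/\alpha^p$ special machines plus the effective-size $\ell_\infty$ bound.
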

	
	
	\section{Stochastic $\ell_p$ Load Balancing: Algorithm and analysis}
	
	In this section we prove Theorem \ref{thm:LB}, namely we give a constant-factor approximation to problem \stocps (please recall the definition of \stocps from Section \ref{sec:intro}). 
	
	Let $\OPT$ denote the smallest expected $\ell_p$ load \eqref{eq:obj} over all assignments of jobs to machines. The development of the algorithm mirrors that of the previous section and proceeds in 3 steps:
		
		\begin{enumerate}
			\item First we write an LP that essentially captures constraints \eqref{eq:mainCst} in a fractional way, which from Theorem~\ref{thm:mainCstr} we know to hold (after some truncation) for the optimal assignment (we also include a 		control on the $\ell_\infty$-norm using exponentially many constraints, as well the coarse control guaranteed by Corollary~\ref{cor:simpleConv}).
			
			\item Then, based on a fractional solution $\bar{x}$ of this LP, we write a reduced LP that is feasible (a crucial point) and imposes the requirements of Theorem \ref{thm:roundNeed} in a fractional way. This reduces the exponentially many inequalities for $\ell_p$ (and $\ell_\infty$) control to just one inequality per machine, by selecting the right $\bar{v}_i$'s (and $\bar{\ell}_i$'s) based on $\bar{x}$; this is done using the ideas in the proof sketch of Theorem~\ref{thm:mainConverse}.
			
			\item Since this reduced LP is much more structured and has fewer constraints, we can use an algorithm for the \GAP to find an integer approximate solution. Then from Theorem \ref{thm:roundNeed} the corresponding assignment has expected $\ell_p$-load $O(\OPT)$. 
		\end{enumerate}
		
		We make some simplifying assumptions. We consider the case $p \in (1,\infty)$, since the case $p=1$ is trivial (just assign assign job $j$ to the machine $i$ that gives the smallest expected job size $\E \J_{ij}$) and the case $p=\infty$ was solved in~\cite{anupamSLB}. By using a standard binary search argument we assume throughout that we have an estimate of the optimal value $\OPT$ within a factor of 2 (i.e., if our starting LP is feasible we reduce the current estimate of $\OPT$, and if it is infeasible we increase it). In fact, to simplify the notation we assume we know $\OPT$ exactly: the error in the estimation translates directly to the constants in the approximation factor. 
			

	\subsection{Starting LP} \label{sec:startingLP}
	
 As in~\cite{KRT,anupamSLB}, we split the job into its \emph{truncated} and \emph{exceptional} parts: Let $\alpha$ be a sufficiently small constant (with $1/\alpha$ integral, to simplify things); we then define the truncated part $\tJ_{ij} = \J_{ij} \cdot \one(\J_{ij} \le \alpha\OPT)$, and the exceptional part $\rJ_{ij} = \J_{ij} \cdot \one(\J_{ij} > \alpha\OPT)$, where $\one(E)$ is the indicator of the event $E$ (notice $\J_{ij} = \tJ_{ij} + \rJ_{ij}$).
 		
		Our LP, with variable $x_{ij}$ denoting the amount of job $j$ assigned to machine $i$, is then the following (as before we use tildes to denote the scaling $\tildeJ_{ij} := \nicefrac{\tJ_{ij}}{44}$):
	\begin{align}
		&\sum_{i,j} (\E \rJ_{ij})\,x_{ij} \le 2 \OPT \label{eq:constrLP0}\\
		&\sum_i \frac{1}{v_i} \bigg(\sum_j  \lfT_{\OPT/v_i^{1/p}}(\tildeJ_{ij})\,x_{ij} - 1\bigg) \le 3~~~~~\forall v_i \in \{1/\alpha^p,\ldots,m\},~~ \forall i \in [m], \label{eq:constrLP1}\\
		&\sum_{i \in K} \sum_j \beta_k(\tJ_{ij}/\OPT)\,x_{ij} \le C \cdot k ~~~~~~~~~~~~~~~\forall K \subseteq [m] \textrm{ with $|K| = k$},~\forall k \in [m]  \label{eq:constrLP2}\\
		&\sum_{i,j} (\E \J_{ij}^p) x_{ij} \le (4\OPT)^p \label{eq:constrLPnew}\\
		&x \in \textrm{assignment polytope}, \label{eq:constrLP3}
	\end{align}
	where $C$ is a sufficiently large constant, and the assignment polytope is the standard one $\{x \in [0,1]^{n \times m} : \sum_i x_{ij} = 1~ \forall j\}$. 	Constraint \eqref{eq:constrLP0} is borrowed from~\cite{anupamSLB} and controls the contribution to the $\ell_p$-norm by the exceptional parts. Constraints \eqref{eq:constrLP1} capture a weakened version of the bounds guaranteed by Theorem \ref{thm:mainCstr} (notice $\lfT \le \lf$); as mentioned earlier, what we gain from this weakening is a better control on the size of the coefficients, important for the rounding step.  Constraint \eqref{eq:constrLP2} is also from~\cite{anupamSLB} and controls the $\ell_\infty$-norm of the truncated part. Constraint \eqref{eq:constrLPnew} imposes the bound guaranteed by Corollary \ref{cor:simpleConv} and is only required to control the loss incurred by using the capped quantity $\lfT$ instead of~$\lf$.

	Lemma 2.3 of~\cite{anupamSLB} shows that the optimal (integral) assignment satisfies constraints \eqref{eq:constrLP0} and \eqref{eq:constrLP2} (notice that since $\|.\|_{\infty} \le \|.\|_p$, the loads of the optimal solution satisfies $\E \|S\|_\infty \le \OPT$). Applying Theorem \ref{thm:mainCstr} and Corollary \ref{cor:simpleConv} with $\{X_{ij}\}_j$ representing the truncated part of the items assigned to machine $i$ by this solution and with $\O = \OPT$, we see that constraints \eqref{eq:constrLP1} and \eqref{eq:constrLPnew} are also satisfied by the optimal solution. Therefore, the LP is feasible. 	
	
	About solving it in polynomial time: Notice that we can write the inequalities \eqref{eq:constrLP1} by setting an auxiliary variable $z_i$ with $$z_i \ge \frac{1}{v_i} \bigg(\sum_j  \lf_{\OPT/v_i^{1/p}}(\tildeJ_{ij})\,x_{ij} - 1\bigg)~~~\forall v_i \in \{1/\alpha^p,\ldots,m\},$$ and replacing constraint \eqref{eq:constrLP1} by just $\sum_i z_i \le 3$. Thus, we can capture all constraints except \eqref{eq:constrLP2} with a poly-sized formulation. Since it is easy to see that we can separate inequalities \eqref{eq:constrLP2} in poly-time (see Section 2.3 of \cite{anupamSLB}), we can use the ellipsoid method to solve the LP in polynomial time. Summarizing this discussion we have the following.
	
	\begin{prop}
		The LP \eqref{eq:constrLP0}-\eqref{eq:constrLP3} is feasible, and can be solved in polynomial time. 
	\end{prop}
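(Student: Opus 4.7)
The plan is to establish feasibility by exhibiting an explicit feasible point (the optimal integral assignment) and then handle the tractability issue by reformulating the constraints so only a polynomial set requires separation.

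For feasibility, I would let $x^*$ denote the indicator vector of the optimal integral assignment achieving $\E\|S\|_p = \OPT$, and verify each constraint in turn. Since $\|\cdot\|_\infty \le \|\cdot\|_p$, the loads $S_i^* = \sum_j \J_{ij} x^*_{ij}$ also satisfy $\E\|S^*\|_\infty \le \OPT$, which is precisely the hypothesis needed to invoke Lemma 2.3 of \cite{anupamSLB}; this yields constraints \eqref{eq:constrLP0} and \eqref{eq:constrLP2} directly. For constraint \eqref{eq:constrLP1}, I would apply Theorem \ref{thm:mainCstr} to the truncated variables $\{\tJ_{ij} x^*_{ij}\}_j$ on each machine $i$ with $\O = \OPT$; the truncation guarantees $\tJ_{ij} \in [0,\alpha \OPT]$, which is exactly the hypothesis of the theorem. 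Since $\lfT \le \lf$, the inequality \eqref{eq:mainCst} gives the weaker \eqref{eq:constrLP1}. Finally, constraint \eqref{eq:constrLPnew} follows from Corollary \ref{cor:simpleConv} applied to the full variables $\{\J_{ij} x^*_{ij}\}$, again using $\E\|S^*\|_p \le \OPT$.

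For the polynomial-time solvability, the first observation is that although \eqref{eq:constrLP1} is a family indexed by the exponentially many sequences $(v_i)_{i \in [m]}$, the constraint is separable across machines: the left-hand side decomposes into $\sum_i$ of terms each depending only on a single $v_i$. Introducing auxiliary variables $z_i$ bounding the per-machine contribution maximized over $v_i \in \{1/\alpha^p,\ldots,m\}$, and replacing \eqref{eq:constrLP1} by $\sum_i z_i \le 3$, reduces the family to only $m^2$ inequalities in $(x,z)$. All remaining constraints except \eqref{eq:constrLP2} are of polynomial size.

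It remains to separate \eqref{eq:constrLP2}: for each fixed $k \in [m]$, one simply sorts the machines by $\sum_j \beta_k(\tJ_{ij}/\OPT)\,\bar{x}_{ij}$ and picks the $k$ largest to form the worst-case $K$ of that cardinality; trying all $m$ values of $k$ gives a polynomial-time separation oracle (this is the argument of Section 2.3 of \cite{anupamSLB}). The ellipsoid method then solves the LP in polynomial time. The only mildly subtle point is ensuring that the $\lfT_{\OPT/v_i^{1/p}}(\tildeJ_{ij})$ coefficients, as well as the $\beta_k$ and $\E \J_{ij}^p$ values, are computable in polynomial time from the explicitly given discrete distributions, which follows by direct evaluation of the defining expectations over the supports.
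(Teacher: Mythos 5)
Your argument tracks the paper's almost line-by-line, but there is one genuine gap: your justification of constraint~\eqref{eq:constrLPnew}. You propose applying Corollary~\ref{cor:simpleConv} to the \emph{full} variables $\{\J_{ij}\,x^*_{ij}\}$, but that corollary requires the random variables to lie in $[0,\alpha\O]$, and the untruncated job sizes $\J_{ij}$ are in general \emph{not} bounded by $\alpha\OPT$ (a job that is huge with tiny probability can have small expectation --- hence small $\OPT$ --- yet exceed $\alpha\OPT$ on its support). So the corollary's hypothesis fails and the argument does not go through. The paper instead applies both Theorem~\ref{thm:mainCstr} \emph{and} Corollary~\ref{cor:simpleConv} to the \emph{truncated} parts $\{\tJ_{ij}\,x^*_{ij}\}$, which do lie in $[0,\alpha\OPT]$ by construction; since $\tJ_{ij}\le\J_{ij}$ pointwise, the truncated loads satisfy $\E\|S'\|_p\le\E\|S\|_p\le\OPT$, and Corollary~\ref{cor:simpleConv} yields $\sum_{i,j}\E\tJ_{ij}^p\,x^*_{ij}\le(4\OPT)^p$. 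This is also exactly what the ``coarse control'' Item~2 of Theorem~\ref{thm:roundNeed} consumes downstream, where the $X_{i,j}$ are set to the truncated sizes. (The LP constraint as displayed writes $\E\J_{ij}^p$ rather than $\E\tJ_{ij}^p$, which is presumably what led you to apply the corollary to the untruncated variables; but read as stated, your argument does not verify the hypothesis of Corollary~\ref{cor:simpleConv}.) The remainder of your proof --- exhibiting the optimal integral assignment as a feasible point, using $\|\cdot\|_\infty\le\|\cdot\|_p$ plus Lemma~2.3 of the cited work for \eqref{eq:constrLP0} and \eqref{eq:constrLP2}, Theorem~\ref{thm:mainCstr} for \eqref{eq:constrLP1}, the auxiliary-variable reformulation of \eqref{eq:constrLP1}, and the sort-and-take-top-$k$ separation oracle for \eqref{eq:constrLP2} followed by the ellipsoid method --- matches the paper's proof.
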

	
	
	\subsection{The reduced LP}

	So suppose we have a feasible fractional solution $\bar{x}$ for the LP \eqref{eq:constrLP0}-\eqref{eq:constrLP3}. It seems we cannot hope to round it to an integral solution and satisfy all the constraints with reasonable loss. However, to control the $\ell_p$-norm of the truncated parts we only need the integral assignment to satisfy the requirements of Theorem \ref{thm:roundNeed} (the exceptional parts will not be problematic). So we will simplify the LP \eqref{eq:constrLP0}-\eqref{eq:constrLP3} by selecting for each machine a \emph{single} $\bar{v}_i$ (based on the proof of Theorem \ref{thm:mainConverse}) and $\bar{\ell}_i$ (based on a simplification of~\cite{anupamSLB}) as follows:
	
	\begin{enumerate}
		\item Set $\bar{v}_i$ to be the largest value in $\{1/\alpha^p, \ldots, m\}$ such that $\sum_j \lfT_{\OPT/\bar{v}_i^{1/p}}(\tildeJ_{ij})\, \bar{x}_{ij} \le 2$, if it exists. Notice that $\sum_j \lfT_{\OPT/v^{1/p}}(\tildeJ_{ij})\, \bar{x}_{ij}$ is a continuous increasing function of $v$, so if the desired $\bar{v}_i$ does not exist it means that $\sum_j \lfT_{\alpha \OPT}(\tildeJ_{ij})\, \bar{x}_{ij} > 2$. Let $I \subseteq [m]$ be the set of machines for which the desired $\bar{v}_i$ exists. For all $i \notin I$, we set $\bar{v}_i = 1/\alpha^p$. 
		
		\item Let $\bar{\ell}_i \in [m]$ be the largest such that $\sum_j \beta_{\bar{\ell}_i}(\tJ_{ij}/\OPT)\,\bar{x}_{ij} \le C$, where $C$ is the constant in constraints \eqref{eq:constrLP2} (such $\bar{\ell}_i$ exists since constraint \eqref{eq:constrLP2} implies that setting it to 1 satisfies this inequality).
	\end{enumerate}
	
	The reduced LP then becomes:
	\begin{align}
		&\sum_{i,j} (\E \rJ_{ij})\,x_{ij} \le 2 \OPT \label{eq:constrIP0}\\	
		&\sum_j \lfT_{\OPT/\bar{v}_i^{1/p}}(\tildeJ_{ij})\, x_{ij} \le 2~~~~~~~~~~~ \forall i \in I \label{eq:constrIP1}\\
		&\sum_j \beta_{\bar{\ell}_i}(\tJ_{ij}/\OPT)\,x_{ij} \le 1~~~~~~~~~~~~~\forall i \label{eq:constrIP2}\\
		&\sum_{i,j} (\E \J_{ij}^p) x_{ij} \le (4\OPT)^p \label{eq:constrIPnew}\\
		&x \in \textrm{assignment polytope}. \label{eq:constrIP3}
	\end{align}
	Notice that by construction $\bar{x}$ is a fractional solution to this LP, so in particular the LP is feasible.
	
	Now we analyze the quality of an integral assignment satisfying approximately this LP; we will see how to obtain such integral assignment in the next section. First, we start by remarking that \eqref{eq:constrIP0} fully controls the exceptional parts of the jobs.
	
	\begin{lemma}
	Consider an integral assignment  $x \in \{0,1\}^{n \times m}$ satisfying constraint \eqref{eq:constrIP0} within a multiplicative factor (i.e., with RHS replaced by $O(\OPT)$). Let $S''_i = \sum_j \rJ_{ij} x_{ij}$ be the load incurred on machine $i$ by the exceptional sizes of jobs assigned to it. Then $\E\|S''\|_p \le O(\OPT)$. 
	\end{lemma}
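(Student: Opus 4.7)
The proof should be quite short: the key observation is that for any non-negative vector $v \in \R^m$ and $p \ge 1$ we have $\|v\|_p \le \|v\|_1$, since if $T = \|v\|_1$ then each $v_i \le T$ gives $v_i^p \le T^{p-1} v_i$, and summing yields $\|v\|_p^p \le T^p$. Since $S''$ has non-negative coordinates (the $\rJ_{ij}$'s are non-negative and the $x_{ij}$'s are $\{0,1\}$-valued), this inequality applies pointwise in every scenario.

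The plan is therefore to chain
\begin{align*}
\E \|S''\|_p \;\le\; \E \|S''\|_1 \;=\; \sum_i \E S''_i \;=\; \sum_{i,j} (\E \rJ_{ij})\, x_{ij},
\end{align*}
using the $\ell_p \le \ell_1$ comparison in the first step, the definition of $\|\cdot\|_1$ on non-negative vectors in the second, and linearity of expectation (together with the fact that $x_{ij}$ is deterministic) in the third. The final expression is precisely the left-hand side of constraint \eqref{eq:constrIP0}, which by hypothesis is bounded by $O(\OPT)$ (since we are given the integral assignment satisfies this constraint up to a multiplicative factor). This yields $\E \|S''\|_p \le O(\OPT)$, as desired.

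There is no real obstacle here: the whole point of writing constraint \eqref{eq:constrIP0} on the expected exceptional sizes in the starting LP was to make this simple $\ell_1$-argument go through in one line, so that the exceptional parts can be handled separately and attention can be focused on the (much harder) analysis of the truncated parts via Theorem~\ref{thm:roundNeed}.
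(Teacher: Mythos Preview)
Your proposal is correct and follows essentially the same approach as the paper: use the pointwise inequality $\|v\|_p \le \|v\|_1$ for non-negative vectors, then linearity of expectation, then invoke the (approximately satisfied) constraint \eqref{eq:constrIP0}. The paper's proof is exactly this one-line chain.
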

	
	\begin{proof}
		Since the $\ell_p$-norm is always at most the $\ell_1$-norm and the $\rJ_{ij}$'s are non-negative, we have 
		\begin{align*}
			\E\|S''\|_p \le \E\|S''\|_1 = \sum_{ij} (\E \rJ_{ij})x_{ij} \le O(\OPT). \tag*{\qedhere}
		\end{align*} 
	\end{proof}
	
	In addition, any integral assignment approximately satisfying constraints \eqref{eq:constrIP1}-\eqref{eq:constrIPnew} fulfills the requirement of Theorem \ref{thm:roundNeed} for the truncated parts, and thus we can control their expected $\ell_p$ norm. 

	\begin{lemma}
	Consider an integral assignment  $x \in \{0,1\}^{n \times m}$ satisfying constraints \eqref{eq:constrIP1}-\eqref{eq:constrIPnew} within a multiplicative factor of 5. For each $i$, let $\{X_{i,j}\}_j = \{\tJ_{ij} x_{ij}\}_j$ (i.e., the truncated part of the jobs assigned to machine $i$). Then $\{X_{i,j}\}_{i,j}$, $\{\bar{v}_i\}_i$, and $\{\bar{\ell}_i\}_i$ satisfy the requirements of Theorem \ref{thm:roundNeed} with $\O = \OPT$. 
	
	In particular, letting $S'_i = \sum_j \tJ_{ij}$ be the load incurred on machine $i$ by the truncated sizes of jobs assigned to it, we have $\E\|S'\|_p \le O(\OPT)$. 
	\end{lemma}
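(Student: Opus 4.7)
\medskip
\noindent\textbf{Proof plan.} The plan is to verify the three hypotheses of Theorem~\ref{thm:roundNeed} with $\O = \OPT$ for the triple $(\{X_{i,j}\}, \{\bar{v}_i\}, \{\bar{\ell}_i\})$, and then invoke it. The pointwise clauses of each hypothesis come essentially for free from the fact that $x$ is $\{0,1\}$-valued and satisfies \eqref{eq:constrIP1}--\eqref{eq:constrIPnew} up to a factor of 5. Indeed, constraint \eqref{eq:constrIP1} gives either $\sum_j \lfT_{\OPT/\bar{v}_i^{1/p}}(\tildeJ_{ij}) \le 10$ (for $i \in I$) or $\bar{v}_i = 1/\alpha^p$ by construction; constraint \eqref{eq:constrIP2} gives $\sum_j \beta_{\bar{\ell}_i}(\tJ_{ij}/\OPT) \le 5$; and \eqref{eq:constrIPnew} together with $\tJ_{ij} \le \J_{ij}$ yields $\sum_{i,j} \E(\tJ_{ij} x_{ij})^p \le O(\OPT)^p$, the coarse control of hypothesis~(2).

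The substantive work is to verify the two remaining ``global'' clauses: $\sum_i 1/\bar{v}_i = O(1)$ and the layering of $\{\bar{\ell}_i\}$. Both depend only on the sequences extracted from the fractional $\bar{x}$, not on the integral $x$. For $\sum_i 1/\bar{v}_i$ I would exploit the maximality in the construction of $\bar{v}_i$: if $\bar{v}_i < m$ and $i \in I$, then replacing $\bar{v}_i$ by $\bar{v}_i + 1$ pushes $\sum_j \lfT_{\OPT/(\bar{v}_i+1)^{1/p}}(\tildeJ_{ij})\bar{x}_{ij}$ strictly above $2$, and for $i \notin I$ we have $\sum_j \lfT_{\alpha\OPT}(\tildeJ_{ij})\bar{x}_{ij} > 2$ by definition of $I$. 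Plugging the sequence $v_i := \min\{\bar{v}_i + 1, m\}$ into \eqref{eq:constrLP1} (evaluated at $\bar{x}$) then forces $\sum_i 1/(\bar{v}_i+1) = O(1)$, after absorbing an $O(1)$ slack from machines saturated at $m$ (whose contribution to the LP constraint is at worst $-1/m$ each, for a total of at most $-1$); using $\bar{v}_i + 1 \le 2\bar{v}_i$ then gives the desired bound.

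For the layering of $\{\bar{\ell}_i\}$ I would again use maximality: any $i$ with $\bar{\ell}_i < k$ has $\sum_j \beta_k(\tJ_{ij}/\OPT)\bar{x}_{ij} > C$, so applying \eqref{eq:constrLP2} to any $k$-element subset of $\{i : \bar{\ell}_i < k\}$ immediately forces $|\{i : \bar{\ell}_i < k\}| < k$, which is exactly the layering required by hypothesis~(3). With all three hypotheses in hand, Theorem~\ref{thm:roundNeed} applied to $\{X_{i,j}\} = \{\tJ_{ij} x_{ij}\}$ gives $\E \|S'\|_p \le O(\OPT)$. The main obstacle I anticipate is the bookkeeping of constants: Theorem~\ref{thm:roundNeed} pins down the numbers $5$ and $10$ in hypothesis~(1), while the natural constants produced by the argument above are somewhat larger; this is harmless because the conclusion is of the form $O(\OPT)$, so Theorem~\ref{thm:roundNeed} can be re-read with any absolute constants in place of $5$ and $10$ with no effect on the final approximation guarantee.
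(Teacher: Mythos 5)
Your proof is correct and follows essentially the same route as the paper's: verify the three hypotheses of Theorem~\ref{thm:roundNeed} with $\O = \OPT$, using the factor-$5$ satisfaction of \eqref{eq:constrIP1}--\eqref{eq:constrIPnew} for the pointwise clauses, and using the maximality of the extracted $\bar{v}_i$'s and $\bar{\ell}_i$'s together with the LP constraints \eqref{eq:constrLP1} and \eqref{eq:constrLP2} (applied to the fractional solution $\bar{x}$) for the two global clauses. The only differences from the paper are cosmetic bookkeeping: you track the negative slack from machines saturated at $\bar{v}_i = m$ explicitly (giving a slightly larger constant than the paper's~$5$, which, as you correctly observe, is harmless since the constants in Theorem~\ref{thm:roundNeed} are absolute and only an $O(\OPT)$ conclusion is needed), and you phrase the $\bar{\ell}_i$-layering argument via the sets $\{i : \bar{\ell}_i < k\}$ rather than $\{i : \bar{\ell}_i = \ell\}$, which implicitly invokes monotonicity of $\ell \mapsto \beta_\ell$ but is equivalent to the paper's direct contradiction.
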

	
	\begin{proof}
		The second part of the lemma follows directly from Theorem \ref{thm:roundNeed}, so we prove the first part. 		First, from the definition of the truncation we have $X_{i,j} \le \alpha \OPT$. We show that Item 1 ($\ell_p$ control) in Theorem~\ref{thm:roundNeed} holds; since $x$ satisfies constraints \eqref{eq:constrIP1} within a multiplicative factor of 5, and by the choice of the $\bar{v}_i$'s, it suffices to show $\sum_i \frac{1}{\bar{v}_i} \le 5$. 
		
		We partition the indices $i$ into 2 sets, depending on whether $\bar{v}_i$ hit the upper bound $m$ or not: $U_{<m} = \{i \in [m] : \bar{v}_i < m\}$ and $U_m = \{i \in [m] : \bar{v}_i = m\}$. By definition $\sum_{i \in U_m} \frac{1}{\bar{v}_i} \le m \cdot \frac{1}{m} = 1$. For an index $i \in U_{< m}$, by maximality of $\bar{v}_i$ we have that $\bar{v}_i + 1$ satisfies
		\begin{align*}
		V_i := \sum_j \lf_{\OPT/(\bar{v}_i + 1)^{1/p}}(\tildeJ_{ij})\, \bar{x}_{ij} > 2,
		\end{align*}
		and hence $V_i - 1\ge 1$. But since $\bar{x}$ satisfies constraints \eqref{eq:constrLP1}, we have $\sum_{i \in U_{<m}} \frac{1}{\bar{v}_i + 1} (V_i - 1) \le 3$, and hence $\sum_{i \in U_{<m}} \frac{1}{\bar{v}_i + 1} \le 3$. Finally, since $\bar{v}_i \ge (1/\alpha)^p \ge 100$ (since $\alpha$ is a sufficiently small constant), we have $\frac{1}{\bar{v}_i} \le 1.01 \frac{1}{\bar{v}_i + 1}$ and hence $\sum_{i \in U_{<m}} \frac{1}{\bar{v}_i} \le 4$. This shows $\sum_i \frac{1}{\bar{v}_i} \le 5$. 
		
		Item 2 (coarse control) in Theorem~\ref{thm:roundNeed} is directly enforced by constraint \eqref{eq:constrIPnew}. To show that Item 3 ($\ell_\infty$ control) in Theorem~\ref{thm:roundNeed} holds, we just need that for all $\ell \in [m]$, for at most $\ell$ of the $i$'s we have $\bar{\ell}_i = \ell$. Since this is clearly true for $\ell = m$, consider $\ell < m$ and suppose by contradiction that there there is a set $K \subseteq [m]$ of size $\ell + 1$ such that $\bar{\ell}_i = \ell$ for all $i \in K$. By maximality of $\bar{\ell}_i$, for all $i \in K$ we have $\sum_j \beta_{\ell + 1}(\tJ_{ij}/\OPT)\,\bar{x}_{ij} > C$; adding this over all $i \in K$ and using that $\bar{x}$ satisfies constraint \eqref{eq:constrLP2} for $K$, we have
		\begin{align*}
			C \cdot (\ell + 1) < \sum_{i \in K} \sum_j \beta_{\ell + 1}(\tJ_{ij}/\OPT)\,\bar{x}_{ij} \stackrel{\eqref{eq:constrLP2}}{\le} C \cdot (\ell + 1),
		\end{align*}
		reaching a contradiction. This concludes the proof. 
	\end{proof}
	
	Since the total size of a job equals its truncated plus its exceptional part, the previous lemmas and triangle inequality give the following. 
	
	\begin{cor} \label{cor:mainGuarantee}
		Consider an integral assignment $x \in \{0,1\}^{n \times m}$ satisfying constraints \eqref{eq:constrIP0}-\eqref{eq:constrIPnew} within a factor of 5. Then letting $S_i = \sum_j \J_{ij} x_{ij}$ be the load incurred on machine $i$, we have $\E \|S\|_p \le O(\OPT)$. 
	\end{cor}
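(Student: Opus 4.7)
The plan is to simply assemble the two previous lemmas via the triangle inequality for $\|\cdot\|_p$, so this should be a short combination step rather than a new argument. The decomposition $\J_{ij} = \tJ_{ij} + \rJ_{ij}$ splits the load on machine $i$ as $S_i = S'_i + S''_i$ with $S'_i = \sum_j \tJ_{ij}x_{ij}$ and $S''_i = \sum_j \rJ_{ij}x_{ij}$, i.e., $S = S' + S''$ as vectors in $\R^m$.

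First I would apply the two preceding lemmas to the given integral assignment $x$. Since $x$ satisfies all of \eqref{eq:constrIP0}--\eqref{eq:constrIPnew} within a multiplicative factor of $5$, in particular it satisfies \eqref{eq:constrIP0} within a multiplicative factor, so the first lemma yields $\E\|S''\|_p \le O(\OPT)$. Likewise $x$ satisfies \eqref{eq:constrIP1}--\eqref{eq:constrIPnew} within a factor of $5$, so the second lemma yields $\E\|S'\|_p \le O(\OPT)$.

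Then I would combine these using the triangle inequality for the $\ell_p$-norm (valid since $p \ge 1$): pointwise (in each outcome) we have $\|S\|_p = \|S' + S''\|_p \le \|S'\|_p + \|S''\|_p$. Taking expectations and using linearity gives $\E\|S\|_p \le \E\|S'\|_p + \E\|S''\|_p \le O(\OPT) + O(\OPT) = O(\OPT)$, which is exactly the desired conclusion.

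There is no real obstacle: all the technical work has been done in Theorem~\ref{thm:roundNeed} and the two preceding lemmas, which respectively handle the truncated and exceptional contributions. The only thing to be mildly careful about is that the constant $5$ in the hypothesis is uniform across constraints \eqref{eq:constrIP0}--\eqref{eq:constrIPnew}, so both lemmas are applicable simultaneously to the same $x$; beyond that, the corollary follows immediately.
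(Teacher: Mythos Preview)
Your proposal is correct and matches the paper's approach exactly: the paper simply notes that since $\J_{ij} = \tJ_{ij} + \rJ_{ij}$, the corollary follows from the two preceding lemmas and the triangle inequality. There is nothing more to add.
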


	
\subsection{Finding an approximate integral solution to the reduced LP} 

	The main observation is that the LP \eqref{eq:constrIP0}-\eqref{eq:constrIP3} is essentially that of the \GAP (\GAPs). In (the feasibility version of) this problem, we again have $m$ machines and $n$ jobs, a precessing time $a_{ij}$ and cost $b_{ij}$ for assigning job $j$ to machine $i$. Given budgets budgets $\{A_i\}_i$ and $B$ respectively, the goal is to find an integral solution $x$ to the system 
	\begin{align}
		&\sum_j a_{ij} x_{ij} \le A_i ~~~~~~~\forall i \in [m] \label{eq:cstMake}\\
		&\sum_{i,j} b_{ij} x_{ij} \le B \label{eq:cstCost}\\
		&x \in \textrm{ assignment polytope}.
	\end{align}
	Shmoys and Tardos~\cite{ST} designed an algorithm that given any fractional solution to the above program produces an integral assignment that satisfies \eqref{eq:cstCost} exactly, and satisfies constraints \eqref{eq:cstMake} with the RHSs increased to $A_i + \max_j a_{ij}$.
	
Notice that the reduced LP \eqref{eq:constrIP0}-\eqref{eq:constrIP3} is essentially an instance of \GAPs: the difference is that we have 2 cost-type constraints and 2 makespan-type constraints for some machines. But we can simply combine the equations of the same type to obtain a \GAPs instance: add $\frac{1}{2\OPT}$ of inequality \eqref{eq:constrIP0} to $\frac{1}{(4\OPT)^p}$ of inequality \eqref{eq:constrIPnew} to form a single cost constraint with RHS 2, and add $\frac{1}{2}$ of inequality \eqref{eq:constrIP1} to \eqref{eq:constrIP2} for each $i \in I$ to obtain a single makespan constraint constraint with RHS 2 (for $i \notin I$ just keep the makespan constraint \eqref{eq:constrIP2}).

	Since this \GAPs instance is a relaxation of the LP \eqref{eq:constrIP0}-\eqref{eq:constrIP3} it is also feasible. Thus, consider any fractional solution to this \GAPs instance and let $\tilde{x}$ be the integral assignment produced by Shmoys-Tardos algorithm~\cite{ST}.
	
	\begin{lemma} \label{lemma:mainAlgo}
		The integral assignment $\tilde{x}$ satisfies all the constraints \eqref{eq:constrIP0}-\eqref{eq:constrIPnew} within a factor of 2. 
	\end{lemma}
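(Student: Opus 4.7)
The plan is to apply the Shmoys-Tardos \GAPs rounding to the single-cost, per-machine-makespan instance formed immediately above the lemma statement, and then translate its guarantees back to the four original rows \eqref{eq:constrIP0}-\eqref{eq:constrIPnew}. The fractional solution $\bar{x}$ from Section \ref{sec:startingLP} is feasible for the reduced LP and hence for any non-negative combination of its rows; in particular it witnesses feasibility of the \GAPs instance. By the Shmoys-Tardos guarantee recalled above, the algorithm therefore produces an integer $\tilde x$ for which the combined cost row holds exactly and each combined makespan row holds with additive slack bounded by the largest per-job coefficient on that row.

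For the cost, since $\tilde x$ satisfies $\frac{1}{2\OPT}\cdot(\text{LHS of }\eqref{eq:constrIP0})+\frac{1}{(4\OPT)^p}\cdot(\text{LHS of }\eqref{eq:constrIPnew})\le 2$ and both summands are non-negative, each is at most $2$ individually. This directly yields \eqref{eq:constrIP0} and \eqref{eq:constrIPnew} within a factor of $2$ of their respective RHSs.

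For the two makespan rows the work is to bound the per-job coefficients, which I expect to be the only nontrivial step. On a machine $i\in I$ the coefficient of $x_{ij}$ in the combined row is $\tfrac{1}{2}\lfT_{\OPT/\bar v_i^{1/p}}(\tildeJ_{ij})+\beta_{\bar\ell_i}(\tJ_{ij}/\OPT)$. The first summand is at most $\tfrac{1}{2}$ because by construction $\lfT=\min\{1,\lf\}\le 1$, which is precisely why the starting LP was written with the capped $\lfT$ rather than $\lf$. The second summand is at most $\alpha$: the truncation forces $\tJ_{ij}/\OPT\le\alpha$ deterministically, so $\E e^{(\tJ_{ij}/\OPT)\ln\bar\ell_i}\le \bar\ell_i^{\alpha}$ and hence $\beta_{\bar\ell_i}(\tJ_{ij}/\OPT)\le\alpha$; the same $\le\alpha$ bound applies on the lone \eqref{eq:constrIP2} row for $i\notin I$. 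For $\alpha$ a sufficiently small constant the total additive slack on any combined makespan row is strictly below $1$, and using non-negativity of each of the two summands that make up the row I conclude that each of \eqref{eq:constrIP1} and \eqref{eq:constrIP2} is individually satisfied within a factor of $2$ of its RHS.

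The main obstacle, as indicated, is the coefficient control: without the capping $\lfT=\min\{1,\lf\}$ a single atypically heavy job could make the Shmoys-Tardos additive slack on \eqref{eq:constrIP1} grow like $\bar v_i^{1/p}$, and without the truncation $\tildeJ$ the analogous blow-up of order $\ln\bar\ell_i$ could happen on \eqref{eq:constrIP2}. These two design choices already baked into the starting LP are exactly what keep the rounding loss constant; beyond that the argument is bookkeeping.
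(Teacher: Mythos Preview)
Your approach is essentially the paper's: combine rows into a single \GAPs instance, apply Shmoys--Tardos, bound the per-job makespan coefficients using the capping $\lfT\le 1$ and the truncation on $\tJ$, then disaggregate. Your bound $\beta_{\bar\ell_i}(\tJ_{ij}/\OPT)\le\alpha$ is in fact a little sharper than the paper's $\beta\le 1$, since you exploit $\tJ_{ij}\le\alpha\,\OPT$ rather than merely $\tJ_{ij}\le\OPT$.

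One small arithmetic slip in the final disaggregation step (which the paper's own proof shares, concluding with factor~$4$ rather than the factor~$2$ in the lemma statement): on a machine $i\in I$ the combined makespan row has RHS $2$ and your additive slack bound gives LHS at most $2+\tfrac12+\alpha$; non-negativity then yields $\sum_j\lfT\cdot\tilde x_{ij}\le 5+2\alpha$ (factor $\approx 2.5$ of the RHS $2$) and $\sum_j\beta\cdot\tilde x_{ij}\le 2.5+\alpha$ (factor $\approx 2.5$ of the RHS $1$), not factor~$2$. This is harmless, since Corollary~\ref{cor:mainGuarantee} only requires the constraints within a factor of~$5$, but the claim ``within a factor of~$2$'' is not what the argument actually delivers.
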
 
	
	\begin{proof}	
	Notice that for this \GAPs instance $A_i \ge 1$ and $$\max_j a_{ij} \le 		\max\left\{\lfT_{\OPT/\bar{v}_i^{1/p}}(\tildeJ_{ij})\,,\,\beta_{\bar{\ell}_i}(\tJ_{ij}/\OPT)\right\};$$ by construction $\lfT_\e \le 1$ (this is the only motivation for introducing this capped version of $\lf$) and since the truncated sizes have $\tJ_{ij} \le \OPT$
		\begin{align*}
		\beta_{\bar{\ell}_i}(\tJ_{ij}/\OPT) = 	\frac{1}{\ln \bar{\ell}_i}\, \ln \E \exp\left(\frac{\tJ_{ij}}{\OPT} \ln \bar{\ell}_i\right) \le \frac{1}{\ln \bar{\ell}_i}\, \ln e^{\ln \bar{\ell}_i} = 1,
		\end{align*}	
	and so $\max_j a_{ij} \le 1$ and hence $A_i + \max_j a_{ij} \le 2 A_i$ for all $i$. Thus, by the guarantees of~\cite{ST} $\tilde{x}$ satisfies constraints \eqref{eq:cstMake} and \eqref{eq:cstCost} within a multiplicative factor of 2. The fact that all the coefficients are non-negative then implies that $\tilde{x}$ satisfies the disaggregated constraints within a multiplicative factor of 4 (i.e., apply that for non-negative $u_{ij}$'s and $v_{ij}$'s, $\sum_{ij} (u_{ij} + v_{ij}) \tilde{x}_{ij} \le 2 \cdot 2$ implies $\sum_{ij} u_{ij} \le 4$, and the same for the $v_{ij}$'s). This concludes the proof. 
	\end{proof}
	
	Then from Corollary \ref{cor:mainGuarantee} the assignment $\tilde{x}$ has expected $\ell_p$ load at most $O(\OPT)$. This proves Theorem~\ref{thm:LB}.
	


	\bibliographystyle{alpha}
	\bibliography{online-lp-short}	


	\pagebreak
	\appendix

		\section{Proof of Theorem \ref{thm:pSel}} \label{app:pSel}

 Consider an instance of $\pSel$, and suppose we have an $\alpha$-approximate linear optimization oracle over $\P$, for a constant $\alpha$. Let $x^*$ be an optimal solution for $\pSel$, let $S = \sum_{j : x^*_j = 1} V_j$, and let $\OPT = (\E S^p)^{1/p}$ be the value of this optimal solution. By a standard binary search argument, assume $\OPT$ is known within a constant factor (i.e., if \eqref{eq:sel} is feasible we increase the estimate of $\OPT$, if it is infeasible we decrease it). In fact, to simplify the notation we assume we know $\OPT$ exactly: the error in the estimation translates directly to the constants in the approximation factor. 
	
 Define $\bar{\e} = \frac{\OPT}{e^{1/\alpha}}$. From Lemma \ref{lemma:UBF} we have that $\sum_{j : x^*_j = 1} \lf_{\bar{\e}}(V_j) \ge \frac{1}{\alpha}$.  Therefore, the optimal solution $x^*$ is feasible for the program
	\begin{align}
		&\sum_j \lf_{\bar{\e}}(V_j)\, x_j \ge \frac{1}{\alpha} \label{eq:sel}\\
		&x \in \P. \notag
	\end{align}
	Then use an $\alpha$-approximate linear optimization oracle over $\P$ to find a solution $\bar{x} \in \P$ satisfying $\sum_j \lf_{\bar{\e}}(V_j) \bar{x}_j \ge 1$ (if cannot find, increase the estimate of $\OPT$). Lemma \ref{lemma:LBF} then implies that $\E(\sum_j V_j \bar{x}_j)^p \ge (\frac{\bar{\e}}{10})^p = \OPT^p\,\frac{1}{(10 e^{1/\alpha})^p}$; since $\alpha$ is a constant, this implies that $(\E(\sum_j V_j \bar{x}_j)^p)^{1/p} \ge \Omega(\OPT)$ and concludes the proof of Theorem \ref{thm:pSel}.
 	
	
	\section{Proof of Corollary \ref{cor:simpleConv}} \label{app:simpleConv}
	
	We prove the contrapositive: assume $\sum_{i,j} \E X_{i,j}^p > (4\O)^p$; we want to show $\E \|S\|_p > \O$. 	Notice the total load $\|S\|_p$ is at least the load of putting each of the $X_{i,j}$'s in their own coordinate, namely
		\begin{align*}
			\|S\|_p \ge \|(X_{i,j})_{i,j}\|_p.
		\end{align*}
		Applying Lemma \ref{lemma:convJensen2} to the scaled right-hand side vector $\frac{1}{\alpha \O} (X_{i,j})_{i,j}$ we obtain that $$\E \|(X_{i,j})\|_p \ge \frac{1}{4} \bigg(\sum_{i,j} \E X_{i,j}^p\bigg)^{1/p} \ge \O;$$ notice we can indeed apply this lemma since $\frac{X_{i,j}}{\alpha \O}  \in [0,1]$ and hence $\E\big(\frac{X_{i,j}}{\alpha \O}\big)^p \le 1$, but $\sum_{i,j} \E \big(\frac{X_{i,j}}{\alpha \O}\big)^p \ge (\frac{1}{\alpha})^p$. Putting the displayed inequalities together concludes the proof.

	\section{Proof of Theorem \ref{thm:roundNeed}} \label{app:roundNeed}

	Let $$I := \bigg\{ i : \sum_j \lfT_{\O/\bar{v}_i^{1/p}}(\tilde{X}_{i,j}) \le 10\bigg\}$$ and let $I^c = [m] \setminus I$ be its complement. Also, for each coordinate $i$, let $J_i = \Big\{j : \lfT_{\O/\bar{v}_i^{1/p}}(\tilde{X}_{i,j}) = \lf_{\O/\bar{v}_i^{1/p}}(\tilde{X}_{i,j})\Big\}$ be the set of $j$'s where the capping of $\lf$ did not made a difference, and let $J^c_i$ be its complement. Let $S^J$ be the vector with coordinates $S^J_i = \sum_{j \in J_i} X_{i,j}$ (so only contributions from $j$'s in $J_i$), and let $S^{J^c} = S - S^J$ be the other contributions. We will break up $S$ as $S = S^J_I + S^{J^c}_I + S_{I^c}$. From triangle inequality is suffices to show that the expected $\ell_p$-norm of each term is at most $O(\O)$. 
	
	We start with $S_I^J$. By definition of $I$ and $J$, for each $i \in I$ we have $$\sum_{j \in J_i} \lf_{\O/\bar{v}_i^{1/p}}(\tilde{X}_{i,j}) = \sum_{j \in J_i} \lfT_{\O/\bar{v}_i^{1/p}}(\tilde{X}_{i,j}) \le 10,$$ and so Lemma \ref{lemma:UBF} gives that $\E (\tilde{S}^J_i)^p \le \frac{\O^p}{\bar{v}_i} e^{10p}$, where $\tilde{S}^J_i := \frac{S^J_i}{44}$ as usual. Moreover, since the sequence satisfies $\sum_i \frac{1}{\bar{v}_i} \le 5$ and all terms are positive, in particular we have $\sum_{i \in I} \frac{1}{\bar{v}_i} \le 5$; so adding over all $i \in I$ we have $$\sum_{i \in I} \E (\tilde{S}_i^J)^p \le \O^p e^{10p} \sum_{i \in I} \frac{1}{\bar{v}_i} \le 5 \O^p e^{10p}.$$ Then the Jensen's based inequality \eqref{eq:jensen} gives $\E \|\tilde{S}^J_{I}\|_p \le O(\O)$, and hence $\E \|S^J_{I}\|_p \le O(\O)$.

	Now we upper bound $\E \|S^{J^c}_I\|_p$. Since for each $j \in J^c_i$ the capping of $\lf$ kicked in, we have $\lfT_{\O/\bar{v}_i^{1/p}}(\tilde{X}_{i,j}) = 1$. Thus, by definition of $I$, for $i \in I$ there can be at most 10 elements in $J_i^c$. Employing the elementary inequality $$\left(\sum_{u \in U} u \right)^p \le \left(|U| \max_{u \in U} u\right)^p \le |U|^p \sum_{u \in U} u^p$$ which holds for any set $U$ of non-negative numbers, we obtain $\E (S^{J^c}_i)^p \le |J_i|^p \sum_{j \in J_i} \E X_{i,j}^p \le 10^p \sum_{j \in J_i} \E X_{i,j}^p $ for each $i \in I$. Adding over all $i \in I$ and using the ``coarse control'' Item 2 of the lemma, we have $\sum_{i \in I} \E(S^{J^c}_i)^p \le 10^p O(\O)^p$. Again inequality \eqref{eq:jensen} gives that $\E \|S^{J^c}_I\|_p \le O(\O)$. 

	Finally we control $S_{I^c}$. First, there are not too many coordinates in $I^c$: again we have $\sum_{i \in I^c} \frac{1}{\bar{v}_i} \le 5$, and since for such $i$'s $\bar{v}_i = (1/\alpha)^p$, this implies that $|I^c| \le 5 \alpha^p$. Moreover, because of Item 3 of the lemma ($\ell_\infty$ control), the arguments from Lemma 2.4 of~\cite{anupamSLB} show that $\E\|S_{I^c}\|_{\infty} \le O(\O)$ (for completeness we provide a proof in Appendix \ref{sec:linfty}). Finally, from $\ell_p$-$\ell_\infty$ comparison (i.e., for any vector $x \in \R^d$, $\|x\|_p \le d^{1/p} \|x\|_{\infty}$ holds) we have $\|S_{I^c}\|_p \le |I^c|^{1/p} \|S_{I^c}\|_{\infty}$, and thus $\E \|S_{I^c}\|_p \le O(\O)$. This concludes the proof of the theorem.

	
	\section{Controlling the $\ell_\infty$-norm} \label{sec:linfty}
	
	For completeness we prove the following lemma, following the proof of Lemma 2.8 of \cite{anupamSLB}.
	
	\begin{lemma} \label{lemma:linfty}
		Consider independent random variables $\{X_{i,j}\}_{i,j}$, and let $S_i = \sum_j X_{i,j}$. Suppose there is a sequence $\ell_1,\ldots,\ell_m \in [m]$ such that $\sum_j \beta_{\ell_i}(X_{i,j}/\O) \le \gamma$ for all $i$, for some constant $\gamma$, and for each $\ell \in [m]$, at most $\ell$ of the $i$'s have $\ell_i = \ell$. Then $\E \|S\|_\infty \le O(\O)$. 
	\end{lemma}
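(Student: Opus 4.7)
The approach is a Chernoff tail bound on each $S_i$ combined with a union bound that exploits the packing condition on the $\ell_i$'s. First I would apply Markov's inequality to the exponential moment of $S_i$: for each $i$ with $\ell_i \ge 2$,
\begin{align*}
\Pr(S_i/\O \ge t) \le \ell_i^{-t}\,\E e^{S_i \ln \ell_i/\O} \;=\; \ell_i^{-t}\prod_j \ell_i^{\beta_{\ell_i}(X_{ij}/\O)} \;=\; \ell_i^{\sum_j \beta_{\ell_i}(X_{ij}/\O) - t} \;\le\; \ell_i^{\gamma - t},
\end{align*}
where the first equality rewrites $\E e^{X_{ij}\ln \ell_i/\O} = \ell_i^{\beta_{\ell_i}(X_{ij}/\O)}$ from the definition of the effective size (using independence across $j$), and the last inequality is the hypothesis $\sum_j \beta_{\ell_i}(X_{ij}/\O) \le \gamma$.

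Next I would take a union bound over machines, grouping by the common value $\ell = \ell_i$: since at most $\ell$ machines share a given $\ell_i = \ell$,
\begin{align*}
\Pr\bigl(\max_{i : \ell_i \ge 2} S_i/\O \ge t\bigr) \;\le\; \sum_{\ell = 2}^m \ell \cdot \ell^{\gamma - t} \;=\; \sum_{\ell=2}^m \ell^{\gamma+1-t}.
\end{align*}
I would then integrate the tail via $\E\max = \int_0^\infty \Pr(\max \ge t)\,dt$, splitting at $t^\star = \gamma + 3$: the trivial bound $\Pr \le 1$ contributes $O(1)$ on $[0,t^\star]$, and swapping sum and integral on $[t^\star,\infty)$ gives the per-term $\int_{t^\star}^\infty \ell^{\gamma+1-t}\,dt = \ell^{-2}/\ln \ell$, whose sum over $\ell \ge 2$ is an absolute constant. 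Thus $\E\max_{i : \ell_i \ge 2} S_i \le O(\O)$.

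Finally, the case $\ell_i = 1$ needs separate treatment since $\beta_1(X) = \E X$ does not arise from an exponential moment; luckily the packing condition at $\ell=1$ permits at most one such machine, and for that single machine $i_0$ the hypothesis reduces to $\E S_{i_0} \le \gamma \O$. Combining via $\E\max_i S_i \le \E\max_{i : \ell_i \ge 2} S_i + \E S_{i_0}$ (when such $i_0$ exists) yields $\E\|S\|_\infty \le O(\O)$. The main subtlety I expect is calibrating the cutoff $t^\star$: taking $t^\star \le \gamma + 2$ would make the inner sum degenerate to $\sum_\ell \ell^{-1} \sim \log m$, so $t^\star$ must be chosen so that the exponent in $\ell^{\gamma+1-t}$ is at most $-2$ to keep the zeta-like series convergent. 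The $\ell_i = 1$ boundary is the only other mild pitfall, but the packing constraint trivially neutralizes it.
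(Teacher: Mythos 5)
Your proof is correct and follows essentially the same route as the paper: a Chernoff/effective-size tail bound on each $S_i$, a union bound over machines grouped by the common value of $\ell_i$ (using the packing condition), and integration of the tail with a cutoff at $t^\star = \gamma+3$ so the resulting series converges. The only differences are cosmetic: you isolate only the $\ell_i=1$ machine while the paper peels off $\ell_i \le 2$, and you swap the sum and integral where the paper instead bounds the sum by an integral in $\ell$.
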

	
	To prove this we need the following lemma (Lemma 2.1 of \cite{anupamSLB}), which follows directly by the Chernoff-Cram\`er method for concentration. 
	
	\begin{lemma} \label{lemma:simpleConc}
		For any independent random variables $Y_1,\ldots,Y_n$, we have $$\Pr\left(\sum_j Y_j \ge \sum_j \beta_\ell(Y_j) + t \right) \le \ell^{-t} ~~~~~\forall t \ge 0.$$
	\end{lemma}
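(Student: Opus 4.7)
The plan is to prove this tail inequality by the standard Chernoff--Cram\'er (exponential Markov) argument, choosing the exponential parameter $s = \ln \ell$, which is precisely the scale built into the definition of $\beta_\ell$ in \eqref{eq:effectiveSize}.

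First, I would exponentiate both sides of the event: since $\ln \ell > 0$ (the interesting case; for $\ell = 1$ the definition degenerates and the statement is vacuous or follows separately), the event $\sum_j Y_j \ge \sum_j \beta_\ell(Y_j) + t$ is the same as $e^{\ln\ell \cdot \sum_j Y_j} \ge e^{\ln\ell \cdot (\sum_j \beta_\ell(Y_j) + t)}$. Applying Markov's inequality to the nonnegative random variable $e^{\ln\ell \cdot \sum_j Y_j}$ gives
\begin{align*}
\Pr\!\left(\sum_j Y_j \ge \sum_j \beta_\ell(Y_j) + t\right) \le \frac{\E\, e^{\ln\ell \cdot \sum_j Y_j}}{e^{\ln\ell \cdot (\sum_j \beta_\ell(Y_j) + t)}}.
\end{align*}

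Next I would use independence of the $Y_j$'s to factor the moment generating function in the numerator as $\E\, e^{\ln\ell \cdot \sum_j Y_j} = \prod_j \E\, e^{\ln\ell \cdot Y_j}$, and then recognize each factor from the definition of effective size: by \eqref{eq:effectiveSize}, $\E\, e^{Y_j \ln\ell} = e^{\ln\ell \cdot \beta_\ell(Y_j)}$. Multiplying these factors yields $\prod_j \E\, e^{\ln\ell \cdot Y_j} = e^{\ln\ell \cdot \sum_j \beta_\ell(Y_j)}$, so the numerator and the $\sum_j \beta_\ell(Y_j)$ contribution in the denominator cancel exactly, leaving
\begin{align*}
\Pr\!\left(\sum_j Y_j \ge \sum_j \beta_\ell(Y_j) + t\right) \le e^{-t\ln\ell} = \ell^{-t},
\end{align*}
which is the desired bound.

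There is really no obstacle here: the whole content of the lemma is that the definition of $\beta_\ell$ was precisely engineered so that the $s = \ln\ell$ Chernoff bound produces the clean constant $\ell^{-t}$. The only minor care point is the direction of Markov's inequality (which requires $\ln\ell \ge 0$, i.e.\ $\ell \ge 1$, consistent with the setting in which $\beta_\ell$ is defined), and verifying that the MGFs in question are finite, which holds under the standing assumption that the effective sizes $\beta_\ell(Y_j)$ exist as finite real numbers in the first place.
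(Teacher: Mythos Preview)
Your proof is correct and is exactly the Chernoff--Cram\'er argument the paper alludes to: exponentiate at scale $s=\ln\ell$, apply Markov, factor by independence, and use that $\E e^{Y_j\ln\ell}=e^{\ln\ell\cdot\beta_\ell(Y_j)}$ by definition of effective size. The paper gives no further detail beyond citing this method, so there is nothing to contrast.
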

	
	\begin{proof}[Proof of Lemma \ref{lemma:linfty}]
		Let $I = \{i : \ell_i \le 2\}$, and notice that by assumption $|I| \le 3$. Jensen's inequality shows that for any random variable $X$ and any $\ell$, $\beta_\ell(X) \ge \E X$. Therefore, our assumption implies 
		\begin{align*}
			\E \|S_I\|_{\infty} \le \sum_{i \in I} \E S_i \le \O \cdot \sum_{i \in I} \sum_j \beta_{\ell_i}(X_{i,j}/\O) \le 2\gamma. 
		\end{align*}
		To upper bound $\E\|S_{I^c}\|_{\infty}$, where $I^c = [m] \setminus I$, we apply Lemma \ref{lemma:simpleConc} to $\frac{S_i}{\O} = \frac{\sum_j X_{i,j}}{\O}$ to obtain for all $t \ge 3$
		\begin{align*}
			\Pr\left(\frac{S_i}{\O} \ge \sum_j \beta_{\ell_i}(X_{i,j}/\O) + t \right) \le \ell_i^{-t}
		\end{align*}
		and then a union bound over all $i \in I^c$ to obtain
		\begin{align*}
			\Pr(\|S_{I^c}\|_{\infty} \ge \O (\gamma + t)) &\le \sum_{i \in I^c} \Pr\left(\frac{S_i}{\O} \ge \sum_j \beta_{\ell_i}(X_{i,j}/\O) + t \right) \le \sum_{\ell = 3}^m \sum_{i : \ell_i = \ell} \ell_i^{-t}\\
			&\le \sum_{\ell = 3}^m \ell_i^{-t + 1} \le \int_2^{\infty} x^{-t + 1}\,\d x = \frac{2^{-t + 2}}{t-2} \le 2^{-t + 2},
		\end{align*}
		where the third inequality uses our assumption. Using the nonegativity of $\|S_{I^c}\|_{\infty}$, we integrate the tail to obtain
		\begin{align*}
			\E \|S_{I^c}\|_{\infty} &= \int_0^\infty \Pr(\|S_{I^c}\|_{\infty} \ge t)\, \d x \\
			&= \int_0^{\O \cdot (\gamma + 3)} \Pr(\|S_{I^c}\|_{\infty} \ge t)\, \d x + \O \cdot \int_3^{\infty} \Pr(\|S_{I^c}\|_{\infty} \ge \O\, \gamma + \O\, t)\, \d x\\
			&\le \O \cdot (\gamma + 3) + \int_3^{\infty} 2^{-t + 2} \,\d x \le O(\O).
		\end{align*}
		Since by triangle inequality $\|S\|_\infty \le \|S_I\|_{\infty} + \|S_{I^c}\|_{\infty}$, putting the above bounds together concludes the proof. 
	\end{proof}

\end{document}